\documentclass[11pt,conference,compsoc,onecolumn,romanappendices]{IEEEtran}
\usepackage[cmex10]{amsmath}
\usepackage{amssymb,enumerate,bbm}
\usepackage{amsthm,graphicx,float,mathtools}
\usepackage[ruled,boxed]{algorithm2e}
\usepackage{cite}


\usepackage{fancyhdr}
\fancypagestyle{plain}{%
  \fancyhf{}
  \fancyfoot[C]{\iffloatpage{}{\thepage}}
  }
\pagestyle{plain}

\newcommand{\nin}{\not\in}
\newcommand{\nocomma}{}
\newcommand{\tmem}[1]{{\em #1\/}}
\newcommand{\tmop}[1]{\ensuremath{\operatorname{#1}}}
\newcommand{\tmstrong}[1]{\textbf{#1}}

\newenvironment{enumeratenumeric}{\begin{enumerate}[1.] }{\end{enumerate}}
\newenvironment{enumerateroman}{\begin{enumerate}[i.] }{\end{enumerate}}
\newenvironment{itemizedot}{\begin{itemize} }{\end{itemize}}
\newenvironment{tmindent}{\begin{tmparmod}{1.0em}{1.0em}{0pt} }{\end{tmparmod}}
\newenvironment{fkindent}{\begin{tmparmod}{0.0em}{1.0em}{0pt} }{\end{tmparmod}}
\newenvironment{tmparmod}[3]{\begin{list}{}{\setlength{\topsep}{0pt}\setlength{\leftmargin}{#1}\setlength{\rightmargin}{#2}\setlength{\parindent}{#3}\setlength{\listparindent}{\parindent}\setlength{\itemindent}{\parindent}\setlength{\parsep}{\parskip}} \item[]}{\end{list}}
\newtheorem{lemma}{Lemma}
\newtheorem{theorem}{Theorem}

\hyphenation{op-tical net-works semi-conduc-tor}

\IEEEoverridecommandlockouts

\clubpenalty=10000
\widowpenalty = 10000

\title{\LARGE Beyond One Third Byzantine Failures}

\author{
\IEEEauthorblockN{
Wang Cheng\IEEEauthorrefmark{1},\quad
Carole Delporte-Gallet\IEEEauthorrefmark{2},\quad
Hugues Fauconnier\IEEEauthorrefmark{3}\\[0.7em]
Rachid Guerraoui\IEEEauthorrefmark{4},\quad
Anne-Marie Kermarrec\IEEEauthorrefmark{5}}
\thanks{\IEEEauthorrefmark{1}\'{E}cole Polytechnique F\'{e}d\'{e}rale de Lausanne,Switzerland,
\; Email: cheng.wang@epfl.ch}
\thanks{\IEEEauthorrefmark{2}LIAFA-Universit\'{e} Paris-Diderot, Paris, France,
\; Email: cd@liafa.univ-paris-diderot.fr}
\thanks{\IEEEauthorrefmark{3}LIAFA-Universit\'{e} Paris-Diderot, Paris, France,
\; Email: hf@liafa.univ-paris-diderot.fr}
\thanks{\IEEEauthorrefmark{4}\'{E}cole Polytechnique F\'{e}d\'{e}rale de Lausanne, Switzerland,
\; Email: rachid.guerraoui@epfl.ch}
\thanks{\IEEEauthorrefmark{5}INRIA Rennes Bretagne-Atlantique, France,
\; Email: anne-marie.kermarrec@inria.fr}
}

\date{}

\begin{document}

\pagenumbering{gobble}

\maketitle

\begin{abstract}

The Byzantine agreement problem requires a set of $n$ processes to agree on
a value sent by a transmitter, despite a subset of $b$ processes behaving in
an arbitrary, i.e. Byzantine, manner and sending corrupted messages to all
processes in the system. It is well known that the problem has a solution in a
(an eventually) synchronous message passing distributed system iff the number
of processes in the Byzantine subset is less than one third of the total
number of processes, i.e. iff $n > 3b+1$. The rest of the processes are
expected to be correct: they should never deviate from the algorithm assigned
to them and send corrupted messages. But what if they still do?

We show in this paper that it is possible to solve Byzantine agreement even
if, beyond the $ b$ ($< n/3 $) Byzantine processes, some of the other  processes
also send corrupted messages, as long as they do not send them to all.  More
specifically, we generalize the classical Byzantine model and consider that
Byzantine failures might be partial. In each communication step, some of the
processes might send corrupted messages to a subset of the processes. This
subset of processes - to which corrupted messages might be sent - could change
over time. We compute the exact number of processes that can commit such
faults, besides those that commit classical Byzantine failures, while still
solving Byzantine agreement. We present a corresponding Byzantine agreement
algorithm and prove its optimality by giving resilience and complexity
bounds.

\end{abstract}

\vspace{2cm}
\begin{center}
\Large This paper is a regular submission.\\
\Large The paper is a student paper.
\end{center}

\newpage
\pagenumbering{arabic}\setcounter{page}{1}

{\newpage}

\section{Introduction}

Pease, Shostak and Lamport introduced the Byzantine model in
their landmark papers~{\cite{lamport1982byzantine,pease1980reaching}}.
A Byzantine process is defined as a process that can arbitrarily
deviate from the algorithm assigned to it and send corrupted messages to other 
processes.  They considered a synchronous model and proved that agreement is
achievable with a fully connected network if and only if the number of
Byzantine processes is less than one third of the total number of processes.
Dolev extended this result to general networks, in which the connectivity
number is more than twice the number of faulty processes
{\cite{dolev1982byzantine}}. The early work on Byzantine agreement is well
summarized in the survey by Fischer {\cite{fischer1983consensus}}.

Several approaches have been proposed to circumvent the impossibility of reaching
Byzantine agreement in an asynchronous context
{\cite{fischer1985impossibility}}.
The eventually synchronous model was presented in {\cite{dwork1988consensus}}:
an intermediate model between synchronous and asynchronous models, allowing some
limited periods of asynchrony. Eventual synchrony is considered weak enough to
model real systems and strong enough to make Byzantine agreement solvable.
Alternative approaches rely on randomized algorithms
{\cite{braud2013fast,dwork1988fault,king2011load,rabin1983randomized}}. As
Karlin and Yao showed in \cite{karlin1986probabilistic}, the \tmem{one third} bound is still
a tight bound for randomized Byzantine agreement algorithms. 

We show in this paper that it is possible to solve Byzantine agreement deterministically even
if, beyond the $ b$ ($< n/3 $) Byzantine processes,  some of the other  processes
also send corrupted messages, as long as they do not send them to all.  We
show that this is possible deterministically, and even in an eventually
synchronous model.  We compute the exact number of processes that can commit
such partial Byzantine faults, besides those that commit classical Byzantine
failures, while still solving Byzantine agreement.  For pedagogical purposes,
we mainly focus in the main paper on the synchronous context and non-signed messages
\cite{lamport1982byzantine,lincoln1993formally}. We discuss signed messages and the
eventually synchronous context in Section \ref{sec-resilience} and the Appendices.

We generalize the classical Byzantine model and
consider that Byzantine failures might be partial. This generalization is, we
believe, interesting in its own right.  In each communication step, some of
the processes might send corrupted messages to a subset of the processes.
The classical Byzantine failure model corresponds to the extreme case  where
this subset is the entire system.  So we consider a system of $n$ processes,
of which  $m$ can be partially faulty.  The processes communicate with each
other directly through a complete network. We assume that each partially
faulty process $p$ is associated with up
to $d$ ($< n - 1$) Byzantine communication links. Such a process $p$ is said to be
{\tmem{$d$-faulty}}. The $d$ Byzantine links are {\tmem{dynamic}}: they may
be different in different communication rounds.
A d-faulty process
somehow means that the local computation of the processes remains correct:
only the communication links related to the faulty processes are
controlled by the adversary - during specific rounds.  This captures practical
situations where processes experience possibly temporary bugs in specific
parts of their code or communication links.
From the component failure model's
view, our generalization is orthogonal to those of
{\cite{tseng2013iterative,santoro2007agreement,amitanand2003distributed}}.

We establish tight bounds on Byzantine agreement in terms of (a) the number of
processes to which corrupted messages can be sent and (b) time complexity,
i.e. the number of rounds needed to reach agreement. Besides basic distributed
computing tools like full information protocols and scenario arguments, we
also introduce and make  use of a new technique we call ``\tmem{View-Transform}'' which
basically enables processes to locally correct partial Byzantine failures
and transform  a classical Byzantine agreement algorithm  into one
that tolerates more than $1/3$ failures. Interestingly, this transformation only
requires  adding a couple more rounds to a classical Byzantine agreement
algorithm, i.e., its time complexity does not grow with the  number of partial
Byzantine faults tolerated.
In fact, by tolerating more than $1/3$ Byzantine failures, our algorithm can
be faster than classical algorithms in the following sense.
In situation where $1/3$ processes are Byzantine, a deterministic
Byzantine algorithm \cite{lamport1982byzantine} need to wait for all correct
processes to communicate, even if some of the communication links between processes
have very large delays.
In our case, these highly delayed links will be viewed as partial failures, and can be totally tolerated.

For a system with $b$ Byzantine processes and $m$ ``d-faulty'' processes, 
Byzantine agreement can be solved among $n$ processes iff
$n > \max \{2 m  + d, 2 d + m, b\}+ 2 b$.    There is thus a clear trade-off between
the number $b$ of Byzantine failures we can tolerate, the number $m$ of  partial
Byzantine failures and $d$.   For instance, the system
could tolerate $1 / 6$ fraction of ``$1$-faulty'' processes in addition to $(1 / 3 - \epsilon)$
Byzantine processes. Tolerating fewer classical Byzantine  failures
would enable us to tolerate many more partial Byzantine ones.
For example, if $b = 0$, we can tolerate up to $n / 2$ ``$1$-faulty'' processes.

The rest of the paper is organized as follows. 
Section \ref{sec-model} describes our partial Byzantine failure model and recalls the Byzantine agreement problem. 
Section \ref{sec-algorithm} presents our Byzantine agreement algorithm in the synchronous context. 
Section \ref{sec-resilience} proves the resilience optimality of our algorithm and also discusses the case where messages are signed.
Section \ref{sec-time} discusses the time optimality of the algorithm.
We conclude by reviewing related work in Section \ref{sec-rel}.
For space limitations, we defer the discussion on early decision and eventual synchrony, 
as well as some correctness proofs to the optional appendices.

\medskip

\section{Model and Definitions}\label{sec-model}

\subsection{Synchronous computations}

We first consider a synchronous message passing distributed system $P$ of $n$
processes. Each process is identified by a unique id $p \in \{0, 1, \ldots, n
- 1\}$.
As in {\cite{lamport1982byzantine,toueg1984simple}}, a
synchronous computation proceeds in a sequence of rounds.\footnote{We
consider eventually synchronous computations in Appendix \ref{sec-esyn}.}
The processes communicate by exchanging messages round by round within a fully
connected point-to-point network. In each round, each process $p$ first sends at
most one message to every other process, possibly to all processes, and then
$p$ receives the messages sent by other processes. The communication channels
are authenticated, i.e. the sender is known to the recipient. Following
{\cite{lamport1982byzantine}}, we consider oral messages\footnote{We discuss the impact of signed messages in Section \ref{sec-resilience}.}
with the following properties:
(a) every message sent is delivered;
(b) the absence of a message can be detected.
In the system, there is a designated process called \tmem{transmitter} which has
an initial input value from some domain $\mathcal{V}$ to transmit to all processes.

We model an algorithm as a set of deterministic automata, one for each
process in the system. Thus, the actions of a process are entirely determined
by the algorithm, the initial value of the transmitter and the messages it receives from others.

\subsection{Failure model}

In short, a d-faulty process $p$ may lie to other processes: in each round, $p$ can send to
a subset of $d$ processes Byzantine messages, i.e., messages that differ from
those that $p$ has to send following its algorithm. We
assume  that up to $m$ ($\geqslant 0$) of the processes are partial
controlled by the adversary (these processes can send  Byzantine messages to
$d$ ($< n - 1$) processes) and  up to $b$ ($\geqslant 0$) are fully controlled
by the adversary. By convention, if $m=0$, we
assume $d=0$ to make our condition simpler to state.

In each round, the adversary chooses up to $d$ communication links from each partial controlled
process that could carry Byzantine messages, while the fully controlled
processes could send Byzantine messages. We call an instance of our system of $n$
processes with $m$ d-faulty processes and $b$ Byzantine processes as a {\tmem{(n, m, d, b)-system}}.
We refer to the correct processes as well as the d-faulty ones as \tmem{non-Byzantine} processes in this paper.

\subsection{Full information algorithms}

We consider full information algorithms
in the sense of {\cite{fischer1982lower,lamport1982byzantine,lynch1996distributed}}.
Every process transmits to all processes its
entire state in each round, including everything it knows about all values
sent by other processes in the previous round. We introduce in the following
a collection of notations (a slight extension of \cite{fischer1982lower}) to establish and prove our results.

We use $P^{l : k}$ to denote the set of strings of process identifiers in
$P$ of length at least $l$ and at most $k$, and $P^k$ to denote the set of
strings of length $k$. An empty string has length $0$. We
use $P^+$ to denote non-empty strings of symbols in $P$ and $P^{\ast}$ to
denote all strings including the empty one. 
We always refer to
$p_0$ as the transmitter in the Byzantine agreement problem,
and $\mathcal{V}$ as the domain of values which processes 
wish to agree on. For convenience, we assume that $\{\bot, 0, 1\} \in \mathcal{V}$
where $\bot$ refers to the empty value.

A {\tmem{$k$-round scenario}} $\sigma$ (in a (n, m, d, b)-system $P$)
describes an execution of the algorithm. Intuitively $\sigma$ describes a
communication scheme admissible for the (n, m, d, b)-system. The scenario 
is determined by the
initial value of each process and the communication scheme.
Given scenario $\sigma$,
$\sigma (p_0 p_1 \ldots p_k)$ represents the value $p_{k - 1}$
tells $p_k$ that $p_{k - 2}$ tells $p_{k - 1}$ ... that $p_0$ tells $p_1$ is
$p_0$'s initial value.
Formally, a {\tmem{$k$-round scenario}} $\sigma$ is a mapping $\sigma : p_0
P^{0 : k} \rightarrow \mathcal{V}$, such that:
\begin{itemize}
  \item $\sigma (p_0)$ is the initial value of transmitter $p_0$.
  \smallskip
  
  \item There are sets $B (\sigma)$ and $D (\sigma)$ of processes (denoting
  the set of Byzantine and d-faulty processes, respectively) such
  that:
  \begin{itemize}\itemsep0em
    \item $|B (\sigma) | \leqslant b$ and $| D (\sigma) | \leqslant m$,
    
    \item for every process $p \nin (B (\sigma) \cup D (\sigma))$: $\sigma
    (wpq) = \sigma (wp)$ for all $q \in P$ and $w \in p_0 P^{0 : k - 2}$,
  
    \item 
     for every process $p \in D(\sigma)$ and round $l$ ($\leqslant k$), there is a set $T$ of at most $d$ processes such that for
    every $q \in P \setminus T$ and every $w \in p_0 P^{l - 2}$ we have
    $\sigma (wpq) = \sigma (wp)$.
  \end{itemize}
\end{itemize}
Note that $\sigma (wpq) \neq \sigma (wp)$ for some strings $w$ of length
$l$ and process $q$ means that $q$ receives a Byzantine message from $p$ in round $l +
1$.

Throughout this paper, we use $\sigma$ to represent a $k$-round scenario for
a (n, m, d, b)-system with transmitter $p_0$, d-faulty processes in $D
(\sigma)$ and Byzantine processes in $B (\sigma)$. Let $\sigma_p (s) = \sigma (s
p)$ for every $s \in p_0 P^{0 : k-1}$. $\sigma_p$ is called the \tmem{view} of $p$. Let $\sigma_{q_1 \ldots q_i} (s) =
\sigma (s q_1 \ldots q_i)$ for every $s \in p_0 P^{0 : k-i}$. $\sigma_{q_1 \ldots q_i}$ is $q_i$'s view of $q_{i
- 1}$'s view ... of $q_1$'s view, or in short $q_i$'s view from $q_1 \ldots
q_i$. Let $\sigma^{p_0 \ldots p_i}$ denote the $(k - i)$-round scenario with transmitter
$p_i$ such that $\sigma^{p_0 \ldots p_i} (p_i s) =
\sigma (p_0 \ldots p_i s)$ for every $s \in P^{0 : k-i}$.
Naturally, $\sigma^{p_0 \ldots p_i}_p$ denotes the view of $p$ with respect
to scenario $\sigma^{p_0 \ldots p_i}$, and $\sigma^{p_0 \ldots p_i}_{q_{1} \ldots
q_j}$ denotes the view of $q_j$ from $q_{1} \ldots q_j$ with respect to
scenario $\sigma^{p_0 \ldots p_i}$.

Let $\mathcal{U}^k$ be the set of
mappings from $p_0 P^{k - 1}$ into $\mathcal{V}$. Any $k$-round algorithm $F$
defined in a (n, m, d, b)-system may be defined on the set of all views;
namely as a function $F$: $\mathcal{U}^k \rightarrow \mathcal{V}$.

\subsection{The Byzantine agreement problem}\label{model-BA}

We  address in this paper the problem of {\tmem{Byzantine agreement}}
 (also called the Byzantine generals problem in \cite{lamport1982byzantine}).
Each process has an output register which records the outcome of
the computation. We assume that the initial value of this register is 
$nil \notin \mathcal{V}$ and that this  output register can be written at most once.

Let $F$ be a $k$-round algorithm and the output is a value in $\mathcal{V}$. Then we say that $F$
solves Byzantine agreement if, for each $k$-round scenario $\sigma$ and every
process $p \in P$, the following conditions hold:
\begin{itemizedot}

  \item {\tmem{Termination}}: Every non-Byzantine process $p$ outputs value $F (\sigma_p)$.

  \item {\tmem{Validity}}: If the transmitter $p_0$ is non-Byzantine, then every
  non-Byzantine process  $p$ outputs the initial value of $p_0$,
  i.e. $F (\sigma_p) = \sigma (p_0)$ if $p, p_0 \nin B
  (\sigma)$.
  
  \item {\tmem{Agreement}}: Any two non-Byzantine processes $p$ and $q$  have the same
  output, i.e. $F (\sigma_p) = F (\sigma_q)$ if $p, q \nin B (\sigma)$.

\end{itemizedot}

\section{The Byzantine Agreement Protocol}\label{sec-algorithm}


In this section, we present an algorithm we call {BA++} (Algorithm \ref{algo-bapp}) for solving
Byzantine agreement within a (n, m, d, b)-system.
We adopt the description style of \cite{fischer1982lower} for our algorithm.
The main theorem is as follows.

\begin{theorem}\label{thm-upper-bound}
  BA++ is a $(b+3)$-round algorithm that
  solves Byzantine agreement for a (n, m, d,
  b)-system if $n > \max \{2 m + d, 2 d + m, b\}+ 2
  b$.
\end{theorem}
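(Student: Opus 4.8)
The plan is to present BA++ as a classical $(b+1)$-round oral-message Byzantine agreement algorithm of the Pease--Shostak--Lamport type \cite{lamport1982byzantine,pease1980reaching}, run ``through'' a two-round \emph{View-Transform} layer; the transform introduces a fixed two-round lag, so the total cost is $(b+1)+2=b+3$ rounds. Informally, every process performs the usual full-information exchange, and the View-Transform acts as an on-line decoder: the corrected value that a non-Byzantine process $q$ assigns at round $j$ to ``what $p$ reported about $p_0$'s value'' is computed from $q$'s \emph{raw} view at round $j+2$ by a nested-majority rule that cross-checks, over two further rounds of echoes, the reports of the other processes about $p$'s message. Feeding these corrected views into the classical algorithm yields BA++.

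The crux is a \textbf{View-Transform lemma}: if $n-2b>\max\{2m+d,\,2d+m\}$, then the corrected views produced by the non-Byzantine processes are exactly the views, in some scenario $\widehat\sigma$ of an $(n,0,0,b)$-system with the same Byzantine set $B(\sigma)$, of those processes; moreover $\widehat\sigma(p_0)=\sigma(p_0)$ whenever $p_0\notin B(\sigma)$. In words, two rounds of cross-validation ``absorb'' all $m$ d-faulty processes, leaving a plain $b$-Byzantine instance while preserving a non-Byzantine transmitter's input. To prove it I would fix $p_0$ and the coordinate under scrutiny and bound, from the viewpoint of an arbitrary non-Byzantine $q$, how many of the doubly relayed reports can carry a value other than the intended one. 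The bookkeeping must separate two sources of corruption that behave differently: a corruption introduced by the \emph{speaker} of a message (a non-Byzantine speaker poisons at most $d$ of its outgoing links in a round) versus a corruption introduced by a \emph{relayer} (there are at most $m$ faulty relayers, each poisoning at most $d$ links per round). Tracking which of the two echo layers each of these effects can reach, one finds that $q$ decodes the intended value, and that all non-Byzantine $q$ decode the \emph{same} value, precisely when $n-2b$ exceeds both $2m+d$ (the ``many relayers, each with few bad links'' regime, dominant when $m\ge d$) and $2d+m$ (the ``few relayers with many bad links, or a d-faulty speaker'' regime, dominant when $d\ge m$); a standard scenario/indistinguishability argument then shows the corrected values are mutually consistent, i.e. constitute a legal scenario $\widehat\sigma$. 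This combinatorial accounting --- obtaining the exact constants and simultaneously the legality of $\widehat\sigma$ --- is where I expect essentially all the difficulty to lie.

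Granting the lemma, the theorem follows quickly. Since $\max\{2m+d,2d+m,b\}\ge b$, the hypothesis gives $n>3b$, so by the classical result \cite{lamport1982byzantine} the $(b+1)$-round algorithm solves Byzantine agreement on the $(n,0,0,b)$-system described by $\widehat\sigma$: every non-Byzantine $q$ outputs $F(\widehat\sigma_q)$ (Termination), any two non-Byzantine processes agree (Agreement), and if $p_0\notin B(\sigma)$ they all output $\widehat\sigma(p_0)=\sigma(p_0)$ (Validity, by the last clause of the lemma). Because the View-Transform is a deterministic function of each process's own $(b+3)$-round view, BA++ is a well-defined map $\mathcal{U}^{b+3}\to\mathcal{V}$ and inherits the three properties. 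Finally one checks the round count: one round for the transmitter's broadcast, $b$ rounds of echoing/recursion for the classical core, and the two-round decoding lag of the View-Transform, for a total of $b+3$ rounds.
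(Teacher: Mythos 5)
Your top-level decomposition coincides with the paper's: a $(b+3)$-round full-information exchange, a View-Transform whose outputs are shown to be the views of a single scenario $\widehat\sigma$ of an $(n,0,0,b)$-system with $\widehat\sigma(p_0)=\sigma(p_0)$ for a non-Byzantine transmitter, followed by the classical $(b+1)$-round $OM$ algorithm; your derivation of the theorem from that lemma (including $n>3b$) is correct and is exactly how the paper concludes. The gap is in the mechanism you propose for the View-Transform, and hence in the proof of the key lemma. You describe the corrected value of a depth-$j$ coordinate as a nested majority computed from the \emph{raw} view at depth $j+2$, i.e. a single-pass decoder with a fixed two-round lag that cross-checks the last two echo rounds of the chain. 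This cannot yield a consistent scenario: if a d-faulty relayer $p_l$ lies to $p_{l+1}$ at an \emph{early} link of the chain $p_0\ldots p_j$, the wrong value is thereafter faithfully propagated by every later relay, so all reports reaching $q$ about that chain carry the same wrong value and no majority over the final two echo layers can detect or repair it. The required consistency $\widehat\sigma(p_0\ldots p_l p_{l+1})=\widehat\sigma(p_0\ldots p_l)$ for non-Byzantine $p_l$ then fails, so $\widehat\sigma$ is not a legal $(n,0,0,b)$-scenario, and the ``standard indistinguishability argument'' you invoke has nothing to stand on.

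The paper's construction avoids this in two ways your sketch does not capture: (i) the two cross-check indices are inserted immediately \emph{after the prefix being corrected} (a majority over $\sigma_q(p_0\ldots p_i\,p_{i+1}p_{i+2}\,s)$, not over $\sigma_q(p_0\ldots p_j\,r\,r')$), and (ii) the correction is \emph{iterated} over all prefix lengths from the innermost ($i=k-3$) outward to $i=0$, each iteration operating on the already-corrected output of the previous one; a separate induction (the Scenario-Transform lemma) then shows that after iteration $i$ every non-Byzantine process at position at least $i$ in a chain relays consistently. A further omission: to make the corrected views literally the views of one scenario, one must also \emph{define} what non-Byzantine processes ``send to'' Byzantine processes in the synthetic scenario (the paper does this explicitly in Line 3 of its Scenario-Transform), since those coordinates are not determined by the non-Byzantine processes' corrected views. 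Your threshold accounting with $2m+d$ and $2d+m$ is in the right spirit and matches the paper's one-level lemma on $LM_3$, but the iteration and the induction establishing legality of $\widehat\sigma$ are where the real work lies, and they are missing.
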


At a very high level (Figure \ref{fig-algorithm}), the idea underlying algorithm BA++ is the following.
The processes exchange their messages in a full information manner
during $b+3$ rounds.\footnote{We discuss how to reduce that number of rounds
in Section \ref{sec-time}.}
According to our model, the views obtained at each process contains both partial failures and Byzantine failures. The first step of BA++ is to correct the partial failures.
This is challenging because the partial faults introduced in the early rounds would
still exist in the subsequent rounds.
We address this problem by an algorithm we call \tmem{View-Transform} (Algorithm \ref{algo-view}):
this transforms a view with partial failures into a view without partial failures.
Another challenge is to ensure that the views (that resulted from a same scenario)
still belong to a same scenario after View-Transform.
This is addressed by iterations of \tmem{Local-Majority} (Algorithm \ref{algo-lm3}).
After applying View-Transform to the original view, the majority algorithm ($O M$) of Lamport
{\cite{lamport1982byzantine}} (or any $(b+1)$-round simultaneous Byzantine
agreement algorithm) can be employed to compute a output.


\begin{figure}[p]
  \centering
  \includegraphics[scale=0.25]{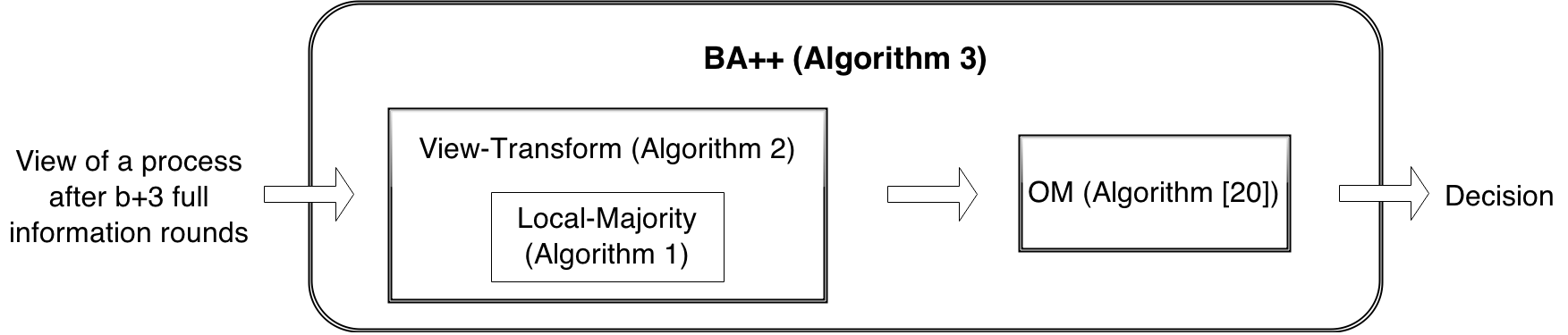}
  \caption{High-level view of Algorithm BA++\label{fig-algorithm}}
\end{figure}

\begin{algorithm}[p]\label{algo-lm3}
  {\tmstrong{Assume}}: $\sigma_p$ is a $k$-round view of process $p$ for a
  (n, m, d, b)-system with $k \geqslant 3$ and transmitter $p_0$.{\hspace*{\fill}}

  {\tmstrong{Code}} for $p$:{\hspace*{\fill}}\\
  For every string $p_0 p_1 \ldots p_i$ and string $s$ with $0 \leqslant | s |
  \leqslant k - 3 - i$:
    \begin{tmindent}
      \begin{enumerate}
        \item $p$ initializes an empty multiset $S$.
        
        \item For every process $p_{i+1} \in P \backslash p_i$, if at least $n - m - b - 1$ elements of
        $\{ \sigma^{p_0 p_1 \ldots p_i}_{s p} (p_i p_{i+1} p_{i+2}): {p_{i+2} \in P \backslash p_{i+1}} \}$ have the same value $v$,
        $p$ adds $v$ to $S$.
        
        \item If more than half of $S$ have the same value $v'$, then $p$ sets
        $L M_3 (\sigma^{p_0 p_1 \ldots p_i}_{s p})$ to $v'$. Otherwise $p$
        sets $L M_3 (\sigma^{p_0 p_1 \ldots p_i}_{s p})$ to $\bot$.
      \end{enumerate}
    \end{tmindent}
  \caption{$3$-round Local-Majority ($L M_3$)}
\end{algorithm}

\begin{algorithm}[p]\label{algo-view}
  {\tmstrong{Assume}}: $\sigma_p$ is a $k$-round view of process $p$ for a
  (n, m, d, b)-system with $k \geqslant 3$ and  transmitter $p_0$.
  $L M_3$ is Algorithm \ref{algo-lm3}.{\hspace*{\fill}}
  
  {\tmstrong{Code}} for $p$:{\hspace*{\fill}}
  
  Loop from $i = k - 3$ to $i = 0$: (denote the following $i$th iteration as transform $V
  T_i^p$.)
  \begin{tmindent}
    \begin{enumerate}
      \item Let $\sigma_p'$ be a copy of $\sigma_p$.
      
      \item $p$ changes $\sigma_p' (p_0 p_1 \ldots p_i s)$ to be $L M_3
      (\sigma^{p_0 p_1 \ldots p_i}_{s p})$ for every $p_1 \ldots p_i$ and
      every string $s$ with $0 \leqslant | s | \leqslant k - 3 - i$.
      
      \item Let $\sigma_p = \sigma_p'$. ($\sigma'_p$ is the output of $V
      T_i^p$.)
    \end{enumerate}
  \end{tmindent}
  After the loop, $p$ outputs the first $(k - 2)$-round view of
  $\sigma_p$.

  \caption{View-Transform $V T^p$ with respect to $L M_3$}
\end{algorithm}

\begin{algorithm}[p]\label{algo-bapp}
  {\tmstrong{Assume}}: $\sigma_p$ is a $(b + 3)$-round view of process $p$ for
  a (n, m, d, b)-system and transmitter $p_0$. $V T^p$ is Algorithm
  \ref{algo-view}.{\hspace*{\fill}}
  
  {\tmstrong{Code}} for $p$:{\hspace*{\fill}}
  \begin{fkindent}
  \begin{enumeratenumeric}
    \item Let $\sigma_p' = V T^p (\sigma_p)$ with respect to $L M_3$.
    
    \item Then $p$ outputs $O M (\sigma_p')$. Here $O M$ is the Byzantine
    agreement algorithm in \cite{lamport1982byzantine}.
  \end{enumeratenumeric}
  \end{fkindent}

  \caption{BA++ with respect to $L M_3$}
\end{algorithm}

\begin{lemma}\label{lemma-lm3}
  Suppose $n > \max \{ 2 m + d, 2 d + m, b \} + 2 b$.
  In $L M_3$ (Algorithm \ref{algo-lm3}),
  if $\sigma^{p_0 p_1 \ldots p_i}_{s p} (p_i p_{i+1} p_{i+2}) = \sigma^{p_0 p_1 \ldots p_i}_{s' p'} (p_i p_{i+1} p_{i+2})$ for all $p_{i+1}$ and $p_{i+2}$, then $L M_3 (\sigma^{p_0 p_1 \ldots p_i}_{s p}) = L M_3 (\sigma^{p_0 p_1 \ldots p_i}_{s' p'})$.
  If $p_i$ is non-Byzantine, then $L M_3 (\sigma^{p_0 p_1 \ldots p_i}_{p}) = \sigma (p_0 p_1 \ldots p_i)$.
\end{lemma}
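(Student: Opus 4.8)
The plan is to analyze $LM_3$ directly from its definition in Algorithm \ref{algo-lm3}. Both claims hinge on understanding, for a fixed $p_i$, which values can enter the multiset $S$ and, for the second claim, which value must dominate $S$.

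For the first claim, observe that the value $LM_3(\sigma^{p_0 p_1 \ldots p_i}_{s p})$ depends \emph{only} on the collection of values $\{\sigma^{p_0 p_1 \ldots p_i}_{s p}(p_i p_{i+1} p_{i+2}) : p_{i+1} \in P\setminus p_i,\ p_{i+2}\in P\setminus p_{i+1}\}$: Step 2 inspects exactly these values to decide, for each $p_{i+1}$, whether some $v$ occurs at least $n-m-b-1$ times among the $p_{i+2}$-indexed entries, and Step 3 takes a majority over the resulting multiset $S$. Hence if two views agree on all these entries --- which is precisely the hypothesis $\sigma^{p_0 p_1 \ldots p_i}_{s p}(p_i p_{i+1} p_{i+2}) = \sigma^{p_0 p_1 \ldots p_i}_{s' p'}(p_i p_{i+1} p_{i+2})$ for all $p_{i+1},p_{i+2}$ --- they produce the same $S$ and therefore the same $LM_3$ output. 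This part is essentially immediate from the definition.

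For the second claim, suppose $p_i$ is non-Byzantine and let $v^\ast = \sigma(p_0 p_1 \ldots p_i)$. I want to show $v^\ast$ is added to $S$ for ``most'' choices of $p_{i+1}$, and that no other value is added often enough to beat it. Fix a non-Byzantine $p_{i+1}$ (there are at least $n - m - b$ of these in $P\setminus p_i$, or $n-m-b-1$ if we must also exclude one more). Then $\sigma^{p_0 \ldots p_i}_{p}(p_i p_{i+1} p_{i+2}) = \sigma(p_0 \ldots p_i p_{i+1} p_{i+2})$ equals $\sigma(p_0\ldots p_i p_{i+1})$ whenever $p_{i+1}$ does not send $p_{i+2}$ a Byzantine message in the relevant round, and $\sigma(p_0\ldots p_i p_{i+1}) = \sigma(p_0\ldots p_i) = v^\ast$ since $p_i$ is non-Byzantine (it forwards $v^\ast$ faithfully to $p_{i+1}$, up to $p_i$'s own at most $d$ bad links if $p_i$ is $d$-faulty --- but if $p_i$ itself is $d$-faulty I must be careful, so I would first split on whether $p_i$ is correct or merely $d$-faulty). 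For a correct $p_{i+1}$ at least $n - 1 - d$ of the values indexed by $p_{i+2}$ equal $v^\ast$; since $n - 1 - d \geq n - m - b - 1$ (from $n > 2d+m+2b$, which gives $m+b \geq$ ... actually from $d \le \max\{\ldots\}$ and the resilience bound one checks $n-1-d \ge n-m-b-1$, i.e.\ $m+b\ge d$ --- this follows since $n > 2d+m$ and $n>\ldots$, I would verify the exact inequality here), the value $v^\ast$ is added to $S$ for this $p_{i+1}$. So $v^\ast$ is added at least $n-m-b-1$ times (once per suitable non-Byzantine $p_{i+1}$, after accounting for the exclusions and for the case $p_i$ $d$-faulty). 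Meanwhile any value $v \ne v^\ast$ can be added to $S$ only on behalf of a $p_{i+1}$ that is Byzantine or $d$-faulty, \emph{or} a non-Byzantine $p_{i+1}$ whose honest value is still $v^\ast$ but which nonetheless accumulates $n-m-b-1$ copies of $v$ --- the latter requires at least $n-m-b-1$ processes $p_{i+2}$ receiving Byzantine messages along edges out of $p_{i+1}$, impossible for a $d$-faulty $p_{i+1}$ when $d < n-m-b-1$, and impossible for a correct $p_{i+1}$. Hence $v\ne v^\ast$ is added at most $m+b$ times. The resilience bound $n > \max\{2m+d, 2d+m, b\} + 2b \geq 2m + 2b + d$ (roughly) ensures $n-m-b-1 > m+b$, so $v^\ast$ strictly dominates $S$ and $LM_3$ outputs $v^\ast = \sigma(p_0\ldots p_i)$.

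The main obstacle, and the step I would spend the most care on, is the bookkeeping around a $d$-faulty $p_i$: when $p_i\in D(\sigma)$ it may corrupt up to $d$ of its outgoing links, so $\sigma(p_0\ldots p_i p_{i+1})$ need not equal $v^\ast$ for all $p_{i+1}$ --- only for all but at most $d$ of them. I would therefore track two separate ``bad'' sets at the $p_{i+1}$ level (those $p_{i+1}$ that are Byzantine/$d$-faulty, and those fed a bad value by a $d$-faulty $p_i$) and at the $p_{i+2}$ level, and verify that the worst-case count of spurious contributions to $S$ is still below $n-m-b-1$. This is exactly where each of the three terms in $\max\{2m+d,\ 2d+m,\ b\}+2b$ gets used, and I expect the tightest case to be $2m+d+2b$ versus $n$. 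Once that inequality chase is done cleanly, both conclusions follow.
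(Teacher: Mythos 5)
Your overall strategy is the same as the paper's: for each candidate $p_{i+1}$, determine which values can clear the threshold $n-m-b-1$ in Step~2 of Algorithm~\ref{algo-lm3} and hence enter $S$, then take the majority in Step~3, splitting on whether $p_i$ is correct or $d$-faulty. The first claim is indeed immediate. But the counting in the second claim contains two concrete errors, both coming from attributing failures to the wrong hop of the three-hop relay $p_i\to p_{i+1}\to p_{i+2}\to p$. First, for a correct $p_{i+1}$ the entries inspected in Step~2 are $\sigma(p_0\ldots p_i p_{i+1}p_{i+2}p)$; what can corrupt them is the last hop, i.e.\ a Byzantine or $d$-faulty $p_{i+2}$, so at least $n-1-m-b$ of them equal $\sigma(p_0\ldots p_{i+1})$ and the threshold is met with no further inequality needed. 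Your count ``at least $n-1-d$'' and the patch ``$n-1-d\ge n-m-b-1$, i.e.\ $m+b\ge d$'' do not work: $m+b\ge d$ is simply not implied by $n>\max\{2m+d,2d+m,b\}+2b$ (take $n=12$, $m=1$, $d=5$, $b=0$), so as written this step fails even though the conclusion is salvageable.

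Second, your bound that a value $v\ne v^{\ast}$ enters $S$ at most $m+b$ times is wrong precisely in the case you flag as delicate, $p_i\in D(\sigma)$. There, up to $d$ processes $p_{i+1}$ --- correct ones included --- receive a wrong value from $p_i$ and then propagate it consistently, so each of them legitimately clears the threshold and contributes a wrong value to $S$; together with the $b$ Byzantine choices of $p_{i+1}$ this gives up to $d+b$ spurious entries (which can exceed $m+b$ when $d>m$) against at least $n-m-d-b$ copies of $v^{\ast}$, and the inequality that saves the day is $n>m+2d+2b$. This is exactly where the term $2d+m$ of the resilience bound is used; your guess that the binding case is $2m+d+2b$ corresponds instead to the other subcase (a $d$-faulty $p_{i+1}$ cannot push a spurious value over the threshold because at most $d+(m-1)+b$ of its entries deviate, and $n-m-b-1>m+d+b-1$ needs $n>2m+d+2b$). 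So the skeleton matches the paper's proof, but the two counts you defer to ``verify later'' are the actual content of the lemma, and as stated both are incorrect.
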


\begin{proof}
  The first part of the lemma follows directly from the algorithm,
  so we only need to show the second part.

  If $m = d = 0$ and $b = 0$, the lemma follows directly since there are no
  failures. In the following, we prove the lemma in the case that $m \neq 0$ or $b
  \neq 0$.
%
  
  If $p_{i + 1}$ is correct,  then there are at
  least $n - m - b - 1$ elements in $\{
  \sigma_p (p_0 \ldots p_{i + 1} p_{i + 2}) : {p_{i + 2} \in P \backslash p_{i + 1}}\}$
  equal to $\sigma (p_0 \ldots p_{i + 1})$,
  which implies $\sigma (p_0 \ldots p_{i + 1})$ is added to $S$.
  
  If $p_{i + 1}$ is d-faulty, then there are at most
  $m + d + b - 1$ values different from $\sigma (p_0 \ldots p_{i + 1})$ in $\{
  \sigma_p (p_0 \ldots p_{i + 1} p_{i + 2}) : {p_{i + 2} \in P \backslash p_{i + 1}} \}$.
  Since $n - m - b - 1 > m + d + b - 1$, only $\sigma (p_0 \ldots p_{i
  + 1})$ might be added to $S$.
  
  Now consider $p_i$. If $p_i$ is correct,
  then all correct processes
  will contribute a value $\sigma (p_0 \ldots p_i)$ to $S$. So there
  are at least $n - 1 - m - b$ values equal to $\sigma (p_0 \ldots p_i)$ in
  $S$ and at most $b$ values in $S$ different from $\sigma (p_0 \ldots p_i)$
  (contributed by $b$ Byzantine processes).
  If $m \neq 0$, then $n > 2m + d + 2b \geqslant m+1+d+2b$.
  If $m = 0$ but $b \neq 0$, then $n > 3b \geqslant m+1+d+2b$.
  So $n - 1 - m - b$ is always greater than $b$, the majority
  value of $S$ is $\sigma (p_0 \ldots p_i)$, i.e. $L M_3 (\sigma_p) = \sigma
  (p_0 \ldots p_i)$.
  
  If $p_i$ is d-faulty, then all correct processes
  except the ones that receive wrong values from $p_i$ will contribute a value
  $\sigma (p_0 \ldots p_i)$ to $S$. So there are at least $n - m - d - b$
  values equal to $\sigma (p_0 \ldots p_i)$ in $S$, and at most $d + b$ values
  different from $\sigma (p_0 \ldots p_i)$ in $S$. Since $n > m + 2 d + 2 b$,
  the majority value of $S$ is still $\sigma (p_0 \ldots p_i)$, i.e. $L M_3
  (\sigma_p^{p_0 \ldots p_i}) = \sigma (p_0 \ldots p_i)$.
\end{proof}

We show that the output of \tmem{View-Transform} for different processes
actually comes from a single scenario of a (n, 0, 0, b)-system for which the $O M$
algorithm guarantees Byzantine agreement in $(b+1)$ rounds. We prove this by
introducing the following \tmem{Scenario-Transform}.

\floatstyle{boxed}
\restylefloat{figure}
\begin{figure}[h]
  {\tmstrong{Assume}}: {\tmstrong{}} $\sigma$ is a $k$-round scenario for a (n,
  m, d, b)-system with $k \geqslant 3$ and transmitter $p_0$. Let $B =
  B (\sigma)$ be the set of Byzantine processes. $V T_i^p$ is the $i$-th iteration  in Algorithm \ref{algo-view}. {\hspace*{\fill}}
  
  {\tmstrong{Transform}}:{\hspace*{\fill}}
  
  Loop from $i = k - 3$ to $i = 0$: (denote the following $i$th iteration as transform $S
  T_i$)
  \begin{tmindent}
    \begin{enumerate}
      \item Let $\sigma'$ be a copy of $\sigma$.
      
      \item For each $p \nin B$, apply $V T_i^p$ to $\sigma'$, i.e. $\sigma'
      (p_0 \ldots p_i s p) = V T_i^p (\sigma_p) (p_0 \ldots p_i s)$ for every
      $s \in P^{0 : k - i - 1}$. (This Line makes sense because view transforms
      are independent for different processes.)
      
      \item For every $p \nin B$, $q \in B$, $s \in P^{0 : k - i - 2}$, set
      $\sigma' (p_0 \ldots p_i s p q)$ to $\sigma' (p_0 \ldots p_i s p)$.
      
      \item Let $\sigma = \sigma'$.
    \end{enumerate}
  \end{tmindent}
  After the loop, output $\sigma$.
\caption{Scenario-Transform (ST) with respect to $L M_3$}
\label{fig-scenario}
\end{figure}

\begin{lemma}\label{lemma-scenario}
  Consider a $k$-round scenario $\sigma$ for a (n, m, d, b)-system
  with $k \geqslant 3$ and transmitter $p_0$. The output scenario
  of Scenario-Transform (Figure \ref{fig-scenario}) is a scenario of a (n, 0, 0, b)-system.
  Moreover, this output scenario satisfies $(S T(\sigma))_p = V T^p (\sigma_p)$ for any non-Byzantine process $p$. If
  $p_0$ is a non-Byzantine transmitter for $\sigma$, then $p_0$ is a correct
  transmitter for $S T (\sigma)$ such that $S T (\sigma) (p_0) = \sigma
  (p_0)$.
\end{lemma}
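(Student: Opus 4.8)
The plan is to trace the loop of Scenario-Transform (Figure~\ref{fig-scenario}) one iteration at a time. Write $\sigma^{[i]}$ for the running scenario at the start of its iteration $i$, so $\sigma^{[k-3]}=\sigma$ and $ST(\sigma)$ is the value of $\sigma^{[i]}$ after iteration $i=0$; write $\sigma_p^{[i]}$ for the running view that $VT^p$ holds at the start of \emph{its} iteration $i$. I would prove all three assertions together by downward induction on $i\in\{k-3,\ldots,0\}$, under an invariant $(\star_i)$ on $\sigma^{[i]}$ with three clauses: (i) $\sigma^{[i]}$ is a scenario of an $(n,m,d,b)$-system with the same Byzantine set $B(\sigma)$, the same d-faulty set $D(\sigma)$, the same transmitter $p_0$, and $\sigma^{[i]}(p_0)=\sigma(p_0)$; (ii) $(\sigma^{[i]})_p=\sigma_p^{[i]}$ for every non-Byzantine $p$; (iii) every non-Byzantine process relays faithfully at every level that iterations $k-3,\ldots,i+1$ have already processed. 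The base case $(\star_{k-3})$ is immediate: $\sigma$ is an $(n,m,d,b)$-scenario by hypothesis and clauses (ii)--(iii) are vacuous. The three assertions are then read off from the state of the invariant after iteration $0$, plus a short separate argument for the transmitter.

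Clause (ii) is pure bookkeeping and I would settle it first. In iteration $i$ of ST, Line~2's rewrite for a non-Byzantine $p$ only touches cells whose chain ends with the symbol $p$, so the rewrites for distinct non-Byzantine processes act on pairwise-disjoint sets of cells (the parenthetical remark in Figure~\ref{fig-scenario}), and Line~3 only touches cells ending with a Byzantine symbol, hence changes the view of no non-Byzantine process. Therefore the only effect of iteration $i$ on $(\sigma^{[i]})_p$ is exactly the rewrite performed by $VT_i^p$; since the view fed to $VT_i^p$ in Line~2 is $(\sigma^{[i]})_p$, which by the induction hypothesis equals the view $VT^p$ has already built, this is the same rewrite $VT^p$ carries out, so $(\sigma^{[i-1]})_p=\sigma_p^{[i-1]}$. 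Taking $i=0$ and restricting to the first $(k{-}2)$ rounds --- as both $VT^p$ and the surviving part of ST do --- gives $(ST(\sigma))_p=VT^p(\sigma_p)$.

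Clauses (i) and (iii) are where Lemma~\ref{lemma-lm3} does all the work. In iteration $i$, Line~2 replaces $\sigma'(p_0\ldots p_i s p)$ by $LM_3\big((\sigma^{[i]})^{p_0\ldots p_i}_{sp}\big)$ for each non-Byzantine $p$; by the first part of Lemma~\ref{lemma-lm3} this new value depends on $(s,p)$ only through the three-deep probe data $\{(\sigma^{[i]})^{p_0\ldots p_i}_{sp}(p_i p_{i+1}p_{i+2})\}_{p_{i+1},p_{i+2}}$. Two consequences. If the process occupying position $i$ in the chain is non-Byzantine, the second part of Lemma~\ref{lemma-lm3}, applied to $(\sigma^{[i]})^{p_0\ldots p_i}$ --- a valid scenario by $(\star_i)$(i) --- identifies the rewritten value as $\sigma^{[i]}(p_0\ldots p_i)$, so a correct (resp.\ d-faulty) process of $\sigma^{[i]}$ stays correct (resp.\ no worse than d-faulty) in $\sigma^{[i-1]}$, and in iteration $0$ with the empty tail the transmitter's initial value is reproduced; this is $(\star_{i-1})$(i). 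And when we compare $\sigma'(wpq)$ with $\sigma'(wp)$ for a non-Byzantine $p$ at a level that iteration $i$ processes: if $q$ is Byzantine, Line~3 equates them outright; if $q$ is non-Byzantine, both become $LM_3$ values whose probe data can differ only in entries recording a disagreement two levels further down --- a level already processed, hence faithful by $(\star_i)$(iii) --- so the probe data coincide and the $LM_3$ values match. This re-establishes $(\star_{i-1})$(iii). After iteration $0$, every non-Byzantine process relays faithfully everywhere, i.e.\ $ST(\sigma)$ is a scenario of an $(n,0,0,b)$-system with Byzantine set $B(\sigma)$. For the transmitter: $\sigma(p_0)$ sits on a length-$1$ chain that no ST rewrite touches, while in iteration $0$ Line~2 sets $\sigma'(p_0 q)=LM_3\big((\sigma^{[0]})^{p_0}_q\big)=\sigma^{[0]}(p_0)=\sigma(p_0)$ for every non-Byzantine $q$ (second part of Lemma~\ref{lemma-lm3}, using that $\sigma^{[0]}$ is valid with non-Byzantine transmitter $p_0$ by $(\star_0)$); hence $p_0$ is a correct transmitter for $ST(\sigma)$ with $ST(\sigma)(p_0)=\sigma(p_0)$, its round-$1$ messages to Byzantine processes being immaterial.

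The step I expect to be delicate is getting clause (iii) and its depth accounting exactly right: one must pin down precisely which levels iteration $i$ rewrites, check that every chain fed to $LM_3$ stays inside the $k$-round scenario's domain, verify that the rewrite re-establishes --- rather than clobbers --- the faithfulness that earlier iterations secured at deeper levels, and confirm that the two deepest rounds, which are never corrected, are exactly those discarded when $VT^p$ (and the matching part of ST) pass to the first $(k{-}2)$-round view. Once that accounting is in place, everything else --- clause (ii), the transmitter, and the two appeals to Lemma~\ref{lemma-lm3} --- is routine substitution.
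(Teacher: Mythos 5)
Your proposal is correct and follows essentially the same route as the paper's proof: a downward induction over the ST loop whose key invariant (your clause (iii)) is exactly the paper's claim that non-Byzantine processes relay faithfully at all already-processed levels, established by the same two appeals to Lemma~\ref{lemma-lm3}, with the view-equality and transmitter statements read off as in the paper. Your clause (i) — that the intermediate scenario remains a valid $(n,m,d,b)$-scenario so that Lemma~\ref{lemma-lm3} may legitimately be applied to it at each iteration — is left implicit in the paper, so making it part of the invariant is a sound (and slightly more careful) rendering of the same argument.
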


\begin{proof}
  For a non-Byzantine process $p$, $(S T(\sigma))_p = V T^p (\sigma_p)$ follows immediately
  from Line $2$ which uses $V T^p_i$ as in algorithm $V T^p$. We now prove that
  the output scenario is a scenario of a (n, 0, 0, b)-system.

  Let $i$th-$\sigma$ be the scenario just after the $i$th loop iteration inside $S T$.
  We prove by induction this claim: if $i \leqslant v \leqslant k - 3$, then
  $i$th-$\sigma (p_0 \ldots p_v p_{v + 1}) = i \text{th-} \sigma (p_0 \ldots
  p_v)$ for every non-Byzantine process $p_v$. Note that if $p_{v + 1} \in B$, the claim
  follows by Line $3$ of $S T$. Thus we only need to prove the claim for the case $p_{v + 1}$ is non-Byzantine.
  
  First consider $i = k - 3$. In this case, $v$ could only be $k - 3$. Suppose
  $p_{k - 3}$ and $p_{k - 2}$ are non-Byzantine. Then $(k - 3) \text{th-} \sigma (p_0
  \ldots p_{k - 3} p_{k - 2}) = V T_{k - 3}^{p_{k - 2}} (\sigma_{p_{k - 2}}) (p_0
  \ldots p_{k - 3})$. According to Line $2$ of $V T_{k - 3}^{p_{k - 2}}$, $V
  T_{k - 3}^{p_{k - 2}} (\sigma_{p_{k - 2}}) (p_0 \ldots p_{k - 3}) = L M_3
  (\sigma_{p_{k - 2}}^{p_0 \ldots p_{k - 3}})$. Since $p_{k - 3} \nin B$, by
  Lemma \ref{lemma-lm3} $L M_3 (\sigma_{p_{k - 2}}^{p_0
  \ldots p_{k - 3}}) = \sigma^{p_0 \ldots p_{k - 3}} (p_{k - 3}) = \sigma (p_0
  \ldots p_{k - 3})$. Since $(k - 3) \text{th-} \sigma (p_0 \ldots p_{k - 3}) =
  \sigma (p_0 \ldots p_{k - 3})$, the claim for $k - 3$ is proved.
  
  Now suppose the claim is true for $i + 1$. Let us prove it for $i$.
  We need to show the claim for all $i \leqslant v \leqslant k - j$.
  First, consider $v = i$ and suppose $p_i$ and $p_{i + 1}$ are non-Byzantine. Then
  according to Line $2$ of $V T_i^{p_{i + 1}}$, $i$th-$\sigma (p_0 \ldots p_i
  p_{i + 1}) = L M_3 \left( (i + 1) \text{th-} \sigma^{p_0 \ldots p_i}_{p_{i
  + 1}} \right)$. Since $p_i$ is non-Byzantine, according to Lemma \ref{lemma-lm3}, $L M_3
  \left( (i + 1) \text{th-} \sigma^{p_0 \ldots p_i}_{p_{i + 1}} \right) = (i
  + 1) \text{th-} \sigma^{p_0 \ldots p_i} (p_i) = (i + 1) \text{th-}
  \sigma (p_0 \ldots p_i)$.
  Hence, $i\text{th-}\sigma (p_0 \ldots p_i p_{i + 1}) = (i + 1) \text{th-} \sigma (p_0 \ldots p_i)$.
  Because the value for $p_0 \ldots p_i$ is not
  changed in the $i$th loop of $S T$, $i \text{th-} \sigma (p_0 \ldots p_i) = (i
  + 1) \text{th-} \sigma (p_0 \ldots p_i)$.
  Thus $i \text{th-} \sigma (p_0 \ldots p_i) = i \text{th-} \sigma (p_0 \ldots p_i p_{i + 1})$, the claim is true
  for $v = i$. Now consider $v > i$. According to $V T_i^{p_{v+1}}$ and $V
  T_i^{p_v}$, $i \text{th-} \sigma (p_0 \ldots p_v p_{v + 1}) = L M_3 \left(
  (i + 1) \text{th-} \sigma^{p_0 \ldots p_i}_{p_{i + 1} \ldots p_{v + 1}}
  \right)$ and $i \text{th-} \sigma (p_0 \ldots p_v) = L M_3 \left( (i + 1)
  \text{th-} \sigma^{p_0 \ldots p_i}_{p_{i + 1} \ldots p_v} \right)$. Since
  $p_v$ is correct and $v > i$, by induction hypothesis $(i + 1) \text{th-}
  \sigma^{p_0 \ldots p_i}_{p_{i + 1} \ldots p_{v + 1}}$ is equal to $(i + 1)
  \text{th-} \sigma^{p_0 \ldots p_i}_{p_{i + 1} \ldots p_v}$. Therefore
  $i$th-$\sigma (p_0 \ldots p_v p_{v + 1}) = i \text{th-} \sigma (p_0 \ldots
  p_v)$, and the claim is proved.
  
  From the claim, we see that in $S T (\sigma)$ every non-Byzantine process
  always sends correct messages to other processes.
  So $S T (\sigma)$ is a scenario of
  (n, 0, 0, b)-system with Byzantine processes $B(\sigma)$. Therefore, if $p_0$ is
  non-Byzantine in $\sigma$ then $p_0$ is also correct in $S T (\sigma)$. Because
  the value of $\sigma (p_0)$ for non-Byzantine process $p_0$ is never changed in $S T$, \
  $S T (\sigma) (p_0) = \sigma (p_0)$.
\end{proof}

With all the lemmas above, now we can give a proof of Theorem \ref{thm-upper-bound}.

\begin{proof}[Proof of Theorem \ref{thm-upper-bound}]
  Suppose $\sigma$ is a $(b + 3)$-round scenario for (n, m, d, b)-system. By
  Lemma \ref{lemma-scenario} above, $\sigma' = S T (\sigma)$ with respect to $L M_3$ is a $(b +
  1)$-round scenario of (n, 0, 0, d)-system. Since $V T^p (\sigma_p) =
  \sigma'_p$ for every non-Byzantine process $p$, $O M (V T^p (\sigma_p))$ are equal
  for all non-Byzantine process which proves the agreement property. Moreover,
  if $p_0$ is non-Byzantine, then $O M (V T^p (\sigma_p)) = S T (\sigma) (p_0)$.
  This shows the validity property. Therefore, the theorem is proved.
\end{proof}

\section{Resilience Lower Bounds}\label{sec-resilience}

We show here that our BA++ algorithm is optimal  with
respect to resilience; namely, $n > \max \{2 m + d, 2 d + m, b\}+ 2 b$ is a
tight bound to reach Byzantine agreement. If $m = d = 0$, this bound is $n > 3 b$ which is tight by
\cite{lamport1982byzantine}. So in this section, we assume that $m, d > 0$ and
show that it is
impossible to achieve Byzantine agreement if $n \leqslant 2 m + d + 2 b$ or $n \leqslant 2 d + m + 2 b$.

\begin{lemma}\label{lemma-impsb1}
  If $n \leqslant 2 m + d + 2 b$, then there is no Byzantine
  agreement algorithm in a (n, m, d, b)-system.
\end{lemma}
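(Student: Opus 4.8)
The plan is to use a standard indistinguishability (scenario) argument of the kind pioneered in \cite{lamport1982byzantine} and refined in \cite{fischer1982lower}. I would argue by contradiction: assume some $k$-round algorithm $F$ solves Byzantine agreement in every $(n,m,d,b)$-system with $n \leqslant 2m+d+2b$. The idea is to partition the $n$ processes into groups whose sizes are tuned to the quantities $m$, $d$, $b$, and then build two scenarios — one with transmitter value $0$ and one with value $1$ — that some honest process cannot distinguish, yet on which Validity forces different outputs. This contradicts the existence of $F$.

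First I would fix a partition of $P$ into blocks $A \cup C \cup X \cup Y$, where the combined sizes are chosen so that: in the "value $0$" world, the processes playing the adversary's hand can be accounted for as $b$ Byzantine plus $m$ $d$-faulty processes (using the $d$ corrupted links to lie to exactly the block we want confused), and symmetrically in the "value $1$" world. The arithmetic constraint $n \leqslant 2m+d+2b$ is exactly what guarantees such a partition exists: roughly, one needs a block of honest processes on each "side" that is small enough (at most $m$, say) to be entirely faked toward the confused block via the per-round $d$-link budget, together with a middle block of size at most $d$ that both sides can corrupt, plus the $2b$ truly-Byzantine processes split as $b$ per side. I would then define the two scenarios concretely as full-information executions: in scenario $\sigma^0$ the transmitter is honest with value $0$ and the designated honest "victim" process $q$ receives, round by round, messages engineered to look exactly like $\sigma^1$ restricted to $q$'s view; in $\sigma^1$ the symmetric construction with value $1$.

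The key steps, in order, are: (1) write down the block partition and verify the size inequalities follow from $n \leqslant 2m+d+2b$, being careful that the $d$-faulty processes need only corrupt at most $d$ links \emph{per round} and that the dynamic nature of the links lets the adversary target the victim block throughout; (2) define $\sigma^0$ and $\sigma^1$ as legal scenarios of the $(n,m,d,b)$-system — here one must check the scenario axioms from Section \ref{sec-model}, i.e. that the non-faulty processes relay faithfully and that each $d$-faulty process deviates on at most $d$ links in each round; (3) exhibit an honest process $q$ with $\sigma^0_q = \sigma^1_q$, so $F(\sigma^0_q) = F(\sigma^1_q)$; (4) apply Validity to get $F(\sigma^0_q) = 0$ and $F(\sigma^1_q) = 1$, a contradiction. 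One may also need Agreement to propagate the contradiction if the natural victim is not itself guaranteed honest-with-honest-transmitter in both worlds, chaining through an intermediate hybrid scenario as in the classical three-process impossibility lifted to the general case.

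The main obstacle I anticipate is step (2): making the two scenarios genuinely admissible under the \emph{partial} Byzantine constraint. Unlike the classical proof, where a faulty process may lie to everyone, here each $d$-faulty process has only a $d$-link budget per round, so the adversary must be shown to be able to simultaneously (a) feed the victim block a consistent fake history and (b) keep all other communications honest, round after round, without exceeding $d$ corrupted links at any faulty process in any single round. Verifying this is where the precise choice of block sizes and the "$2m+d$" versus "$2d+m$" split of the bound really matters, and it is the part of the argument that needs the most care; the indistinguishability and Validity steps are then routine.
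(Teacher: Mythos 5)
Your proposal follows essentially the same route as the paper's proof: a five-block partition (two blocks of size at most $m$ acting as $d$-faulty in one scenario each, two blocks of size at most $b$ acting as Byzantine in one scenario each, and a victim block of size at most $d$), two recursively defined scenarios that the victim block cannot distinguish, and Validity applied to the non-Byzantine transmitter in both worlds. You correctly identify the crux — that the victim block has size at most $d$, so the per-round link budget of each $d$-faulty process suffices to feed it a consistent fake history — and your hedge about needing Agreement is unnecessary here, since the transmitter can be placed in a block that is non-Byzantine (correct in one scenario, $d$-faulty in the other) so Validity applies directly in both.
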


\begin{proof}
  Consider a Byzantine agreement algorithm $F$ for a
  (n, m, d, b)-system. Since $n \leqslant 2 m + d + 2 b$, $P$ can be
  partitioned into five non-empty sets $G$, $H$, $I$, $J$ and $K$, with $| G |
  \leqslant m$, $| H | \leqslant m$, $| I | \leqslant b$, $| J | \leqslant b$,
  $| K | \leqslant d$. Select an arbitrary process in $G$ as transmitter
  $p_0$. We define scenarios $\alpha$ and $\beta$ recursively as follows:
  \smallskip
  \begin{enumerateroman}
    \item For every $p \in P$, $k \in K$, $q \in P \backslash K$, let
    \[ \alpha (p_0) = 0, \alpha (p_0 p) = 0, \]
    \[ \beta (p_0) = 1, \beta (p_0 k) = 0, \beta (p_0 q) = 1, \]
    \item For every $g \in G$, $h \in H$, $i \in I$, $j \in J$, $k \in K$, $p
    \in P$, $q \in P \backslash K$, $w \in p_0 P^{\ast}$, define the following values
    recursively on the length of $w$:
    \[ \alpha (w g p) = \alpha (w g) \nocomma, \alpha (w i p) = \alpha (w i),
       \alpha (w k p) = \alpha (w k), \]
    \[ \beta (w h p) = \beta (w h) \nocomma, \beta (w j p) = \beta (w j),
       \beta (w k p) = \beta (w p), \]
    \[ \alpha (w h k) = \beta (w h k) \nocomma, \alpha (w h q) = \alpha (w h),
       \alpha (w j p) = \beta (w j p) \nocomma, \text{} \]
    \[ \beta (w g k) = \alpha (w g k), \beta (w g q) = \beta (w g), \beta (w i
       p) = \alpha (w i p) . \]
  \end{enumerateroman}
  It is easy to check that $\alpha$ is a scenario of a (n,m,d,b)-system with d-faulty processes in $H$ and Byzantine processes in $J$, and that $\beta$ is a scenario of a (n,m,d,b)-system with d-faulty processes in $G$ and Byzantine processes in $I$.

  In the construction, $\alpha_k = \beta_k$ for all $k \in K$. Thus, $F (\alpha_k) = F (\beta_k)$ for all $k \in K$.
  Since $p_0$ is a non-Byzantine process in both $\alpha$ and $\beta$, according to Byzantine agreement we have
  \[ F (\alpha_k) = \alpha (p_0) = 0, \]
  \[ F (\beta_k) = \beta (p_0) = 1. \]
  However, it is a contradiction to that $F (\alpha_k) = F (\beta_k)$ for all $k \in K$.
  The lemma is proved.
\end{proof}

\begin{lemma}\label{lemma-impsb2}
  If $n \leqslant 2 d + m + 2 b$, then there is no Byzantine
  agreement algorithm in a (n, m, d, b)-system.
\end{lemma}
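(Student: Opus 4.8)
The plan is to mirror the partition-and-indistinguishability argument of Lemma~\ref{lemma-impsb1}, but now exploiting the symmetric roles of $m$ and $d$ in the bound. Since $n \leqslant 2d + m + 2b$, I would partition $P$ into five non-empty sets $G$, $H$, $I$, $J$ and $K$ with $|G| \leqslant d$, $|H| \leqslant d$, $|I| \leqslant b$, $|J| \leqslant b$, and $|K| \leqslant m$. The key observation is that the set $K$, which in the previous proof had size $\leqslant d$ and served merely as a set of ``confused observers'', will now have size $\leqslant m$; so the processes in $K$ will be the ones we declare $d$-faulty, and the sets $G$, $H$ (of size $\leqslant d$) will play the role of the small link-fault targets. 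Concretely, I would build two scenarios $\alpha$ and $\beta$ over a common value domain $\{0,1\}$ with the transmitter $p_0$ chosen in $G$.

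Next I would give the recursive construction of $\alpha$ and $\beta$, structurally analogous to part (ii) of the previous lemma but with the $m$-bounded and $d$-bounded roles swapped: in $\alpha$, the set $K$ is $d$-faulty and $J$ is Byzantine; in $\beta$, the set $K$ is $d$-faulty and $I$ is Byzantine. The $d$-faultiness of the processes in $K$ is used precisely to let each $k \in K$ send, in each round, a ``wrong'' value down at most $d$ links --- namely the links toward $G$ in one scenario and toward $H$ in the other --- so that $\alpha$ and $\beta$ restricted to every $k \in K$ look identical, i.e. $\alpha_k = \beta_k$ for all $k \in K$. Meanwhile $p_0 \in G$ is non-Byzantine (indeed correct) in both scenarios, with $\alpha(p_0)=0$ and $\beta(p_0)=1$. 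One then checks, as before, that $\alpha$ is a legal scenario of the $(n,m,d,b)$-system with $d$-faulty set $K$ and Byzantine set $J$, and $\beta$ is legal with $d$-faulty set $K$ and Byzantine set $I$; here the size bounds $|K|\leqslant m$, $|J|,|I|\leqslant b$, $|G|,|H|\leqslant d$ are exactly what make the constructions admissible.

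Finally, the contradiction: by Agreement and Validity applied to $\alpha$ (where $p_0$ is non-Byzantine), $F(\alpha_k) = \alpha(p_0) = 0$ for $k \in K$; applied to $\beta$, $F(\beta_k) = \beta(p_0) = 1$; but $\alpha_k = \beta_k$ forces $F(\alpha_k) = F(\beta_k)$, a contradiction. I expect the main obstacle to be setting up the recursive message-assignment rules so that both $\alpha$ and $\beta$ are simultaneously \emph{valid} scenarios --- one has to be careful that along every path $w$, the processes in $G$ (or $H$) receiving the ``flipped'' messages from $K$-processes do not themselves exceed their $d$-link budget when they relay, and that the $d$-faulty processes in $K$ never need more than $d$ corrupted outgoing links in any single round. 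The bookkeeping is essentially the same as in Lemma~\ref{lemma-impsb1} with $m$ and $d$ interchanged, so the symmetry of the bound $\max\{2m+d,\,2d+m\}$ is what makes the argument go through verbatim.
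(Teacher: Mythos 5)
Your overall strategy (partition into five sets, two scenarios with a set of observers that cannot distinguish them, contradiction via Validity and Agreement) is the right template, and the paper's proof does follow it. But your role assignment breaks the construction in the very first round. You place the transmitter $p_0$ in a set $G$ of size at most $d$, declare $p_0$ \emph{correct} in both scenarios with $\alpha(p_0)=0$ and $\beta(p_0)=1$, and take the observer set to be $K$. A correct transmitter sends its true initial value to every process, so $\alpha(p_0 k)=0\neq 1=\beta(p_0 k)$ for every $k\in K$, and the claimed indistinguishability $\alpha_k=\beta_k$ already fails at round one. Making the $K$-processes $d$-faulty does not repair this: a process's view records what it \emph{receives}, not what it sends, so letting $k\in K$ lie toward $G$ and $H$ cannot hide from $k$ the fact that $p_0$ told it two different values; and whatever $G$ and $H$ relay back to $K$ will likewise differ, since they correctly forward the (different) values they got from the honest transmitter.

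The missing idea is that for the bound $n\leqslant 2d+m+2b$ the transmitter itself must be $d$-faulty in \emph{both} scenarios. The paper partitions $P$ into $G,H,I,J,K$ with $|G|\leqslant m$, $|H|\leqslant d$, $|I|\leqslant d$, $|J|\leqslant b$, $|K|\leqslant b$, puts $p_0$ in the $m$-sized set $G$, and makes $G$ the $d$-faulty set in both $\alpha$ and $\beta$. The observers are the two $d$-sized sets $H$ and $I$: in $\alpha$ the transmitter (value $0$) lies only to $H$, telling it $1$; in $\beta$ the transmitter (value $1$) lies only to $I$, telling it $0$. Each lie uses at most $d$ outgoing links per round, which is exactly why the observer sets must have size at most $d$ --- this is where the ``$2d$'' in the bound is spent, while the ``$m$'' is spent on the faulty set containing $p_0$ and the ``$2b$'' on the two Byzantine sets $J$ and $K$ that swap roles between the scenarios. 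With that assignment one gets $\alpha_h=\beta_h$ for $h\in H$ and $\alpha_i=\beta_i$ for $i\in I$, and the contradiction you describe goes through. Your proposal interchanges which set is $m$-sized and which are $d$-sized in a way that leaves the transmitter honest, so the problem is not delicate bookkeeping along paths --- the construction cannot be completed as stated.
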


The proof for Lemma \ref{lemma-impsb2} is similar to the proof of Lemma \ref{lemma-impsb1}.
Due to space limitation, we defer the proof to Appendix \ref{app-omitted}.

Taking together the algorithm in Section \ref{sec-algorithm} and the lemmas above, 
we have the following theorem.

\begin{theorem}\label{thm-bound}
  Byzantine agreement can be solved in a (n, m, d, b)-system if and only if $n > \max \{2 m + d, 2 d + m, b\}+ 2
  b$.
\end{theorem}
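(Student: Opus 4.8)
The plan is to obtain Theorem \ref{thm-bound} by gluing together the three ingredients already established: Theorem \ref{thm-upper-bound} for the ``if'' direction, and Lemmas \ref{lemma-impsb1} and \ref{lemma-impsb2}, supplemented by the classical impossibility of Lamport et al.~\cite{lamport1982byzantine}, for the ``only if'' direction. The only genuine work is a case analysis verifying that the negation of the inequality $n > \max\{2m+d,\,2d+m,\,b\} + 2b$ is always covered by one of those three impossibility results.

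For sufficiency, I would simply note that if $n > \max\{2m+d,\,2d+m,\,b\} + 2b$, then Theorem \ref{thm-upper-bound} already says BA++ (Algorithm \ref{algo-bapp}, via the View-Transform of Algorithm \ref{algo-view}) is a $(b+3)$-round algorithm solving Byzantine agreement in the (n, m, d, b)-system. For necessity, suppose $n \leqslant \max\{2m+d,\,2d+m,\,b\} + 2b$. Subtracting $2b$ gives $n - 2b \leqslant \max\{2m+d,\,2d+m,\,b\}$, so at least one of three inequalities holds: $n \leqslant 2m+d+2b$, $n \leqslant 2d+m+2b$, or $n \leqslant 3b$. The first case is handled by Lemma \ref{lemma-impsb1} and the second by Lemma \ref{lemma-impsb2}. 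For the third case, $n \leqslant 3b$, I would argue by reduction: every $k$-round scenario of an (n, 0, 0, b)-system is also a $k$-round scenario of the (n, m, d, b)-system (take $D(\sigma) = \emptyset$, which is admissible since $0 \leqslant m$), and on such scenarios ``non-Byzantine'' coincides with ``correct''; hence any algorithm $F \colon \mathcal{U}^k \to \mathcal{V}$ solving Byzantine agreement in the (n, m, d, b)-system would, restricted to these scenarios, solve classical Byzantine agreement among $n$ processes with $b$ Byzantine faults, which is impossible when $n \leqslant 3b$ by~\cite{lamport1982byzantine} (consistent with the paper's remark on the $m = d = 0$ subcase). In every case no algorithm exists, which finishes the necessity direction and hence the theorem.

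The only mildly delicate point — and the one I would state explicitly rather than leave to the reader — is why the $b$-term inside the maximum is needed at all: in the regime where $b$ dominates both $2m+d$ and $2d+m$, neither Lemma \ref{lemma-impsb1} nor Lemma \ref{lemma-impsb2} is binding, and it is precisely the classical $n > 3b$ bound, recovered through the scenario-restriction reduction above, that rules out agreement. Everything else is the routine bookkeeping of distributing the negation over the maximum.
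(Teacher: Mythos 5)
Your proof is correct and follows the same route as the paper: sufficiency from Theorem \ref{thm-upper-bound}, necessity by distributing the negation of the bound over the maximum and invoking Lemmas \ref{lemma-impsb1} and \ref{lemma-impsb2} together with the classical $n \leqslant 3b$ impossibility. Your explicit reduction for the case where $b$ dominates the maximum (every $(n,0,0,b)$-scenario is an $(n,m,d,b)$-scenario with $D(\sigma)=\emptyset$) is in fact slightly more careful than the paper, which only remarks that the $m=d=0$ subcase reduces to the classical bound and then assumes $m,d>0$, without addressing the regime $2m+d,\,2d+m < b$ in which neither Lemma \ref{lemma-impsb1} nor Lemma \ref{lemma-impsb2} is binding.
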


\subsection*{Signed messages}\label{sec-signed}

So far we have assumed oral message.
We now discuss the case where processes could send signed messages {\cite{lamport1982byzantine}}. 
In this case, we also have a tight bound on the number of processes
for reaching Byzantine agreement.
Following {\cite{lamport1982byzantine}}, a signed message satisfies the
following two properties:
\begin{enumerate}
\itemsep0em
  \item The signature of a non-Byzantine process cannot be forged and any alteration of
  its content can be detected.
  
  \item Every process can verify the authenticity of a signature.
\end{enumerate}

Formally, suppose $\sigma$ is a $k$-round scenario for a (n, m, d, b)-system
with signed messages. Let $\sigma (p_0 p_1 \ldots p_i p)$ ($i < k$) be a
message received by process $p$. If process $p_j$ ($j \leqslant i$) is
non-Byzantine, then either $\sigma (p_0 \ldots p_i p) = \sigma (p_0 \ldots p_j)$, or the signature of $p_j$ is forged.


\begin{algorithm}[h]
    {\tmstrong{Assume}}: $\sigma_p$ is a $(b + 2)$-round view of process $p$
    for a (n, m, d, b)-system with signed messages, and $p_0$ is the
    transmitter.
    
    {\tmstrong{Code}} for $p$:
    \begin{fkindent}
      \begin{enumeratenumeric}
        \item $p$ initializes an empty set $S$.
        
        \item For every string $p_0 \ldots p_i$ ($0 \leqslant i \leqslant b +
        1$, and $p_0, \ldots, p_i$ are different processes): if the signatures
        attached to value $\sigma_p (p_0 \ldots p_i)$ are correct, then $p$
        adds $\sigma (p_0 p_1 \ldots p_i)$ into $S$.
        
        \item $p$ outputs the majority value of $S$.
      \end{enumeratenumeric}
    \end{fkindent}
    \caption{Algorithm $\tmop{SBA}$++\label{algo-sba1}}
\end{algorithm}

We present an algorithm called SBA++ (Algorithm \ref{algo-sba1}) for solving Byzantine agreement with signed message.
Due to space limitation, we move the proof of Algorithm SBA++ and the following theorem into Appendix \ref{app-sba}.

\begin{theorem}
  \label{thm-sba}Byzantine agreement can be solved for a (n, m, d, b)-system
  with signed messages if and only if $n > m + d + b$.
\end{theorem}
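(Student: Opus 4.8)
The plan is to prove both directions. For the ``if'' direction, I would verify that Algorithm SBA++ solves Byzantine agreement whenever $n > m + d + b$; this is essentially the content of the (commented-out) lemma above, so I would reinstate that argument. The key observation is the signed-message property: for any received chain $\sigma_p(p_0 \ldots p_i)$ with correct signatures, if $p_j$ ($j \leqslant i$) is the first non-Byzantine process on the chain, then $\sigma_p(p_0 \ldots p_i) = \sigma(p_0 \ldots p_j)$. First I would handle the case where $p_0$ is non-Byzantine: every value placed in $S_p$ with valid signatures must equal $\sigma(p_0)$, and since $n > m + d + b \geqslant d + 1$, a correct (non-d-faulty, non-Byzantine) process other than $p_0$ receives $\sigma(p_0)$ directly from $p_0$ and forwards it, so $S_p = \{\sigma(p_0)\}$ and every non-Byzantine $p$ outputs $\sigma(p_0)$ — this gives both validity and agreement in this case. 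Then, for Byzantine $p_0$, I would show $S_p = S_{p'}$ for any two non-Byzantine $p, p'$: if $\sigma_p(p_0 \ldots p_k) \in S_p$, let $p_l$ be the non-Byzantine process on the chain with smallest index; since $p_0, \ldots, p_{l-1}$ are all faulty and there are at most $m + b$ faulty processes, $l \leqslant m + b$ — but we need the chain length $b+2$ to allow one more hop, so actually $l \leqslant b + 1$ wait, the relevant bound is that at most $b$ processes are fully Byzantine and can forge nothing useful; I need to be careful here, since a d-faulty process on the chain does not forge signatures of others. Because $n - m - b > d$, process $p_l$ sends $\sigma(p_0 \ldots p_l)$ over a non-Byzantine link to some correct $q \neq p_l$, and the $(b+2)$-round view of $p'$ contains $\sigma_{p'}(p_0 \ldots p_l q) = \sigma(p_0 \ldots p_l)$, which gets added to $S_{p'}$; by symmetry $S_p = S_{p'}$, so Line 3 yields a common output, giving agreement.

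For the ``only if'' direction, I would show that Byzantine agreement is impossible when $n \leqslant m + d + b$, by a partitioning/scenario argument in the same spirit as the proof of Lemma \ref{lemma-impsb1}, but simpler because signed messages eliminate the factor-of-two blow-up on $b$. Partition $P$ into three sets $G$, $H$, $K$ with $|G| \leqslant m$, $|H| \leqslant b$, $|K| \leqslant d$ (non-empty, using $n \leqslant m + d + b$; if some bound is loose, move spare processes appropriately). Pick the transmitter $p_0 \in G$. Build two scenarios $\alpha$ and $\beta$: in $\alpha$, $p_0$ has input $0$, $H$ is Byzantine, and $G$ is d-faulty; in $\beta$, $p_0$ has input $1$, $H$ is Byzantine, and $G$ is d-faulty. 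In both, the processes in $K$ receive value $0$ over the (at most $d$) d-faulty links from $p_0$-or-$G$, while in one scenario the Byzantine set $H$ behaves so as to make $K$'s total view identical across $\alpha$ and $\beta$. Since $p_0$ is non-Byzantine (it is only d-faulty) in both, validity forces every non-Byzantine process — in particular each $k \in K$ — to output $0$ in $\alpha$ and $1$ in $\beta$; but $\alpha_k = \beta_k$ forces $F(\alpha_k) = F(\beta_k)$, a contradiction. The main subtlety is making the d-faulty corruption (at most $d$ links per d-faulty process, dynamically chosen) plus the $b$ Byzantine processes together simulate enough confusion so that the views of $K$-processes coincide; since signed messages prevent a d-faulty process from forging the transmitter's signature, I must route the ``false'' value $0$ to $K$ through the at-most-$d$ legitimately corruptible links out of $p_0$ itself (so $p_0$ is d-faulty and sends $0$ to $K$, $1$ to the rest), and use the $b$ Byzantine processes in $H$ only to patch the remaining discrepancies in the forwarded (signed) chains.

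I expect the main obstacle to be the bookkeeping in the impossibility construction: unlike the oral-message case there are fewer ``free'' faulty processes (no doubling of $b$), so the recursive definition of $\alpha$ and $\beta$ must be tighter, and I must check at every round that (i) each d-faulty process corrupts at most $d$ links, (ii) signatures of non-Byzantine processes are never forged in either scenario, and (iii) the induced views of the $K$-processes are genuinely equal. Once the scenarios are correctly defined the contradiction is immediate, so the entire weight of the lower bound rests on exhibiting an admissible pair $(\alpha, \beta)$ — I would define them recursively on string length exactly as in Lemma \ref{lemma-impsb1}, with $K$ playing the role of the ``decider'' set and $G, H$ the two competing faulty sets, then verify admissibility clause by clause. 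For the ``if'' direction the only delicate point is confirming that $b+2$ rounds suffice for the first non-Byzantine node on any valid chain to reach index at most $b+1$ and still have one hop left to relay to a correct process — this follows because at most $b$ processes can forge/suppress arbitrarily, so any chain of length $b+2$ with valid signatures has a non-Byzantine process among its first $b+1$ entries.
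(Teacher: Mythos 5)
Your ``if'' direction is essentially the paper's own proof of the corresponding lemma (Appendix~\ref{app-sba}): non-Byzantine transmitter forces $S_p=\{\sigma(p_0)\}$ because $n>m+d+b$ guarantees an uncorrupted link out of a d-faulty $p_0$; Byzantine transmitter is handled by locating the first non-Byzantine process $p_l$ on a validly signed chain (index $\leqslant b$, since all its predecessors must be \emph{Byzantine}, not merely faulty --- your momentary $l\leqslant m+b$ detour is resolved correctly by your final remark) and relaying through a correct $q$ within the $b+2$ rounds. That half is fine.

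The ``only if'' direction has a genuine gap in the scenario construction. You propose that in both $\alpha$ (input $0$) and $\beta$ (input $1$) the d-faulty transmitter $p_0$ sends the \emph{value} $0$ to the decider set $K$ over its $\leqslant d$ corruptible links. But under the signed-message semantics you yourself list as constraint (ii), a non-Byzantine $p_0$ cannot produce a validly signed $0$ when its input is $1$: any such alteration is detectable. So in $\alpha$ the processes of $K$ receive a \emph{validly signed} $0$, while in $\beta$ they receive a message whose signature check fails --- these views are distinguishable, and $\alpha_k=\beta_k$ does not hold, so the contradiction never materializes. The fix, which is what the paper actually does, is to have $p_0$ (and every process in $G$) send the detectably-invalid value $\bot$ to $K$ in \emph{both} scenarios, and to make every chain reaching $K$ --- through $G$, through the Byzantine set $H$, or through $K$ itself --- carry $\bot$ as well; since $n\leqslant m+d+b$ means $P=G\cup H\cup K$, the $K$-processes then see nothing but $\bot$ everywhere and their views coincide trivially. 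Your architecture (three-set partition, $p_0$ d-faulty in both scenarios, validity applied to $K$) is the right one, but the concrete content routed to $K$ must be ``no valid information at all,'' not a forged concrete value.
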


\section{Time Optimality}\label{sec-time}

In this section, we investigate the time complexity of reaching Byzantine
agreement for a (n, m, d, b)-system. If $m = 0$, the communication rounds
needed to reach Byzantine agreement is $b + 1$ by
\cite{fischer1982lower}. So in this section, we assume $m > 0$. We show that
in some cases ($n \geqslant \max \{ 2 m + 2 d, b + 1 \} + 2 b$) the lower
bound of the number of rounds for reaching Byzantine agreement is $b + 2$, and in other cases
(e.g. $b = 0$) the lower bound is $b + 3$.

We first show that a ($b + 2$)-round algorithm is available if $n \geqslant
\max \{ 2 m + 2 d, b + 1 \} + 2 b$. In this case we have the following $2$-round
Local-Majority algorithm.

\begin{algorithm}[H]
  {\tmstrong{Assume}}: $\sigma_p$ is a $k$-round view of process $p$ for a
  (n, m, d, b)-system with $k \geqslant 3$ and $p_0$ is the
  transmitter.{\hspace*{\fill}}

  {\tmstrong{Code}} for $p$:{\hspace*{\fill}}\\
  For every string $p_0 p_1 \ldots p_i$ and string $s$ with $0 \leqslant | s |
  \leqslant k - 3 - i$:
    \begin{tmindent}
      \begin{enumerate}
        \item If more than half of $\{ \sigma^{p_0 p_1 \ldots p_i}_{s p} (p_i p_{i+1}) : {p_{i+1} \in P \backslash p_{i}} \}$ have the same value $v$, then $p$ sets
        $L M_2 (\sigma^{p_0 p_1 \ldots p_i}_{s p})$ to $v$. Otherwise $p$
        sets $L M_2 (\sigma^{p_0 p_1 \ldots p_i}_{s p})$ to $\bot$.
      \end{enumerate}
    \end{tmindent}
  \caption{$2$-round Local-Majority ($L M_2$)\label{algo-lm2}}
\end{algorithm}

\begin{lemma}\label{lemma-lm2}
  Suppose $n \geqslant 2 m + 2 d + 2 b$ and $n > 2b + 1$.
  In $L M_2$ (Algorithm \ref{algo-lm2}),
  if $\sigma^{p_0 p_1 \ldots p_i}_{s p} (p_i p_{i+1}) = \sigma^{p_0 p_1 \ldots p_i}_{s' p'} (p_i p_{i+1})$ for all $p_{i+1}$, then $L M_2 (\sigma^{p_0 p_1 \ldots p_i}_{s p}) = L M_2 (\sigma^{p_0 p_1 \ldots p_i}_{s' p'})$.
  If $p_i$ is non-Byzantine, then $L M_2 (\sigma^{p_0 p_1 \ldots p_i}_{p}) = \sigma (p_0 p_1 \ldots p_i)$.
\end{lemma}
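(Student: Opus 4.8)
The plan is to mirror the structure of the proof of Lemma \ref{lemma-lm3}, but now counting values in the one-level multiset $\{\sigma^{p_0\ldots p_i}_{sp}(p_i p_{i+1}) : p_{i+1}\in P\setminus p_i\}$ rather than in the derived set $S$. The first assertion is immediate: since $LM_2$ is a deterministic function of that multiset, equal multisets give equal outputs. For the second assertion, fix a non-Byzantine $p_i$ and a view $\sigma_p$ (so the relevant strings are genuine prefixes, $s$ empty); I want to show that strictly more than half of the $n-1$ entries $\sigma_p(p_0\ldots p_i p_{i+1})$ equal $\sigma(p_0\ldots p_i)$. The entries are indexed by the $n-1$ processes $p_{i+1}\neq p_i$; among these, at most $m$ are $d$-faulty, at most $b$ are Byzantine, and the remaining $n-1-m-b$ are correct.

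First I would handle the case $p_i$ correct. Every correct $p_{i+1}$ relays $\sigma(p_0\ldots p_i)$ faithfully, so at least $n-1-m-b$ entries carry the right value, and at most $m+b$ entries (the $d$-faulty and Byzantine relayers) can differ. The hypothesis $n\geqslant 2m+2d+2b$ gives $n-1-m-b \geqslant m+2d+b-1 \geqslant m+b$ whenever $d\geqslant 1$ (which holds since we are in the regime $m>0$, hence $d\geqslant 1$ by the convention that $m=0\Rightarrow d=0$ but $m\neq0$ allows $d\geqslant1$; if one wants to be careful one also invokes $n>2b+1$ to rule out the degenerate tie). Actually the cleaner inequality to quote is $n-1-m-b > (n-1)/2 \iff n-1 > 2m+2b$, which follows from $n\geqslant 2m+2d+2b$ together with $d\geqslant 1$, or from $n>2b+1$ when $m=d=0$ — but since we assumed $m>0$ in this section, $d\geqslant1$ and $n-1\geqslant 2m+2d+2b-1 > 2m+2b$. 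So the majority value is $\sigma(p_0\ldots p_i)$ and $LM_2(\sigma_p)=\sigma(p_0\ldots p_i)$ as claimed.

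Next, the case $p_i$ $d$-faulty. Now $p_i$ itself may corrupt the messages it sends, but only on at most $d$ of its outgoing links in round $i+1$; so at most $d$ of the correct $p_{i+1}$'s receive a wrong value directly, and the remaining correct processes still relay $\sigma(p_0\ldots p_i)$. Thus at least $n-1-m-d-b$ entries equal $\sigma(p_0\ldots p_i)$, and at most $m+d+b$ entries differ. The condition $n\geqslant 2m+2d+2b$ yields $n-1-m-d-b \geqslant m+d+b-1$, and to get strict majority I again use $n-1 > 2m+2d+2b-1 \geqslant 2(m+d+b)-1$, i.e. $n-1-m-d-b > (n-1)/2$; the supplementary hypothesis $n>2b+1$ is what prevents an exact split when the slack is zero. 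Hence the majority value of the multiset is still $\sigma(p_0\ldots p_i)$.

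The main obstacle is purely a careful bookkeeping of strict versus non-strict inequalities: unlike Lemma \ref{lemma-lm3}, here the resilience hypothesis is $n\geqslant 2m+2d+2b$ (with equality allowed), so the naive count only gives $n-1-m-d-b \geqslant m+d+b-1$, which is \emph{not} a strict majority on its own. The role of the second hypothesis $n>2b+1$ — more precisely the fact that $m>0$ in this section forces $d\geqslant1$ — is exactly to supply the extra $+1$ needed to turn "$\geqslant$ half" into "$>$ half". I would make this explicit by splitting once more on whether $n=2m+2d+2b$ or $n>2m+2d+2b$: in the strict case the count is immediate, and in the boundary case I use $d\geqslant1$ to write $n-1 = 2m+2d+2b-1 \geqslant 2m+2d+2b-1$ and check that $n-1-m-d-b = m+d+b-1 \geqslant (n-1-1)/2 + \tfrac12$... — in short, the only delicate point is confirming that the ties are broken in our favor, and everything else is the same scenario-argument bookkeeping as in Lemma \ref{lemma-lm3}.
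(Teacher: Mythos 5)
Your first assertion and your treatment of the case where $p_i$ is correct are fine and match the paper's proof (at most $m+b$ of the $n-1$ relayers can deviate, and $n-1>2(m+b)$ follows from $n\geqslant 2m+2d+2b$ with $d\geqslant 1$, or from $n>2b+1$ when $m=d=0$). The genuine gap is in the case where $p_i$ is $d$-faulty: your count of ``at least $n-1-m-d-b$ good entries and at most $m+d+b$ bad entries'' is off by one, and with it the majority argument fails exactly at the boundary $n=2m+2d+2b$ that the hypothesis permits. Indeed, if $n=2m+2d+2b$ then $n-1-m-d-b=m+d+b-1$, which is \emph{less} than $(n-1)/2=m+d+b-\tfrac12$, so your good entries do not form a strict majority. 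Your attempted rescue is not valid either: $n\geqslant 2m+2d+2b$ gives only $n-1\geqslant 2(m+d+b)-1$, not $n-1>2m+2d+2b-1$ as you write, and the hypothesis $n>2b+1$ plays no role here (it is only needed for the correct-$p_i$ case when $m=d=0$, in which case there are no $d$-faulty processes at all). The trailing inequality $m+d+b-1\geqslant (n-2)/2+\tfrac12$ in your last paragraph is simply false at the boundary.

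The missing idea, which is what the paper uses, is that when $p_i$ is $d$-faulty, $p_i$ itself is one of the $m$ processes in $D(\sigma)$, so among the $n-1$ relayers $p_{i+1}\in P\setminus p_i$ at most $m-1$ are $d$-faulty. Hence at most $(m-1)+d+b$ entries can differ from $\sigma(p_0\ldots p_i)$ (the $m-1$ other $d$-faulty relayers, the $b$ Byzantine relayers, and the at most $d$ correct relayers to whom $p_i$ lied), and at least $n-m-d-b$ entries are correct. Then $n\geqslant 2m+2d+2b$ gives $n-1\geqslant 2(m+d+b)-1>2(m-1+d+b)$, so the correct value is a strict majority with no tie to break. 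In short: the extra $+1$ you were hunting for comes from excluding $p_i$ from its own set of potential corrupters, not from the hypothesis $n>2b+1$.
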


\begin{proof}
  The first part of the lemma follows directly from the algorithm.
  So we only need to show the second part.

  If $p_i$ is correct, then in $\{ \sigma_p (p_0 \ldots
  p_{i + 1}) : {p_{i + 1} \in P \backslash p_i} \}$ there are at least $n - 1 - m - b$
  values equal to $\sigma (p_0 \ldots p_i)$ and at most $m + b$ values
  different from $\sigma (p_0 \ldots p_i)$ of which $b$ values are contributed
  by $B (\sigma)$ and $m$ values are contributed by $D (\sigma)$.
  If $m  = d = 0$, then $n > 2b+1 = 2m+2b+1$. If $m \neq 0$ and $d \neq 0$,
  then $n \geqslant 2m + 2d +2b > 2m + 2b + 1$.
  So $n$ is always greater than
  $m + 2 b + 1$, the majority values of $\{ \sigma_p (p_0 \ldots
  p_{i + 1}) : {p_{i + 1} \in P \backslash p_i} \}$ are equal to $\sigma (p_0 \ldots p_i)$,
  i.e. $L M_2 (\sigma^{p_0 p_1 \ldots p_i}_{p}) = \sigma (p_0 p_1 \ldots p_i)$.
  
  If $p_i$ is d-faulty, then in $\{ \sigma_p (p_0 \ldots
  p_{i + 1}) : {p_{i + 1} \in P \backslash p_i} \}$ there are at least $n - m - d - b$
  values equal to $\sigma (p_0 \ldots p_i)$ and at most $m - 1 + d + b$ values
  different from $\sigma (p_0 \ldots p_i)$ of which $b$ values are contributed
  by $B (\sigma)$ and $m - 1 + d$ values are contributed by $D (\sigma)$.
  Since $n \geqslant 2 m + 2 d + 2 b$, we have $n - 1 > 2 (m - 1 + d + b)$,
  the majority values are equal to $\sigma (p_0 \ldots p_i)$,
  i.e. $L M_2 (\sigma^{p_0 p_1 \ldots p_i}_{p}) = \sigma (p_0 p_1 \ldots p_i)$.
\end{proof}

\begin{lemma}
  If $n \geqslant \max \{ 2 m + 2 d, b + 1 \} + 2 b$, then Byzantine agreement
  can be solved in $b + 2$ rounds for a (n, m, d, b)-system.
\end{lemma}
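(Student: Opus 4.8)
The plan is to mirror the structure of the proof of Theorem~\ref{thm-upper-bound}, but replacing the $3$-round object $L M_3$ with the $2$-round object $L M_2$, and shortening the View-Transform loop accordingly. Concretely, I would first introduce a variant of View-Transform — call it $V T^p_{(2)}$ — that is identical to Algorithm~\ref{algo-view} except that the inner substitution on Line~$2$ uses $L M_2(\sigma^{p_0 p_1 \ldots p_i}_{s p})$ instead of $L M_3$, and the loop runs from $i = k-2$ down to $i = 0$ (since $L M_2$ consumes one fewer round than $L M_3$), with the process outputting the first $(k-1)$-round view at the end. Then the $(b+2)$-round algorithm is: exchange full-information messages for $b+2$ rounds, apply $V T^p_{(2)}$ to obtain a $(b+1)$-round view $\sigma'_p$, and output $O M(\sigma'_p)$, exactly as in Algorithm~\ref{algo-bapp}.

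The correctness argument then proceeds in two steps, paralleling Lemmas~\ref{lemma-lm3}, \ref{lemma-scenario} and the proof of Theorem~\ref{thm-upper-bound}. First, Lemma~\ref{lemma-lm2} already gives the two properties of $L M_2$ we need: it is a deterministic function of the relevant length-two suffixes of the view, and it recovers $\sigma(p_0 \ldots p_i)$ whenever $p_i$ is non-Byzantine. Second, I would prove an analogue of Lemma~\ref{lemma-scenario}: define a Scenario-Transform $S T_{(2)}$ that applies $V T^p_{(2),i}$ to every non-Byzantine $p$ and copies non-Byzantine values over to Byzantine successors (Lines~$2$--$3$ of Figure~\ref{fig-scenario}), and show by downward induction on the loop index $i$ that after iteration $i$ every non-Byzantine process $p_v$ (for $i \le v \le k-2$) satisfies $i\text{th-}\sigma(p_0 \ldots p_v p_{v+1}) = i\text{th-}\sigma(p_0 \ldots p_v)$. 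The base case $v = i = k-2$ and the inductive step both use the second part of Lemma~\ref{lemma-lm2} (for $v = i$, since $p_i$ is non-Byzantine) and the first part together with the induction hypothesis (for $v > i$, since then the length-two suffixes from the two views agree). This shows $S T_{(2)}(\sigma)$ is a scenario of an $(n,0,0,b)$-system, that $(S T_{(2)}(\sigma))_p = V T^p_{(2)}(\sigma_p)$ for non-Byzantine $p$, and that $S T_{(2)}(\sigma)(p_0) = \sigma(p_0)$ when $p_0$ is non-Byzantine. Finally, since $O M$ solves Byzantine agreement for an $(n,0,0,b)$-system in $b+1$ rounds whenever $n > 3b$ — which is implied by $n \ge \max\{2m+2d, b+1\} + 2b \ge 3b+1$ — agreement and validity follow just as in the proof of Theorem~\ref{thm-upper-bound}.

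The main obstacle is checking that the resilience hypothesis $n \ge \max\{2m+2d, b+1\} + 2b$ actually suffices to invoke Lemma~\ref{lemma-lm2} at every step: Lemma~\ref{lemma-lm2} requires $n \ge 2m+2d+2b$ and $n > 2b+1$. The first is exactly the $2m+2d$ branch of the max (plus $2b$); the second follows from the $b+1$ branch, since $n \ge (b+1)+2b = 3b+1 > 2b+1$. One subtlety worth a sentence: when $L M_2$ is applied inside $V T^p_{(2),i}$ it is evaluated not on the original view but on a partially-transformed view $(i+1)\text{th-}\sigma$; I need the invariant that throughout the transform the object being fed to $L M_2$ is still a legitimate view of a (possibly already partly-corrected) $(n,m,d,b)$-scenario, so that the counting argument in Lemma~\ref{lemma-lm2} still applies — but this is exactly what the induction on $i$ delivers, so it is not a real gap, merely a point to state carefully. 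Aside from that, the proof is a routine transcription of the $L M_3$ argument with the round count reduced by one.
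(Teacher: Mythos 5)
Your proposal follows exactly the paper's own argument, which simply says: replace $L M_3$ by $L M_2$ in View-Transform (shortening the loop accordingly) and rerun the Scenario-Transform induction and the $O M$ step; your version just spells out the details the paper leaves implicit. The resilience bookkeeping you include (that $n \geqslant \max\{2m+2d,\, b+1\}+2b$ yields both hypotheses of Lemma~\ref{lemma-lm2} and $n>3b$ for $O M$) is correct, so the proposal is sound and essentially identical in approach.
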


\begin{proof}
  Section \ref{sec-algorithm} uses $3$-round algorithm $L M_3$ 
  (Algorithm \ref{algo-lm3}) to implement the sub-algorithm
  \tmem{View-Transform} (Algorithm \ref{algo-view}), and then get a ($b+3$)-round Byzantine agreement
  algorithm. When $n \geqslant \max \{ 2 m + 2 d, b + 1 \} + 2 b$, we have a $2$-round
  Local-Majority algorithm $L M_2$. Thus if we replace $L M_3$ with $L M_2$, we 
  obtain a ($b+2$)-round Byzantine agreement algorithm.
\end{proof}

In the following, we prove that $b + 2$ is also a lower bound of rounds for reaching
Byzantine agreement. Specially, $b + 2$ is a tight bound for the case
$n \geqslant \max \{ 2 m + 2 d, b + 1 \} + 2 b$.

\begin{theorem}
  Byzantine agreement for a (n, m, d, b)-system ($m,
  d > 0$) requires at least $b + 2$ rounds.
\end{theorem}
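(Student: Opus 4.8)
The plan is to prove the $b+2$ lower bound by a scenario (indistinguishability) argument, adapting the classical $b+1$ lower bound of \cite{fischer1982lower} for $m=0$ by adding one extra ``layer'' of confusion attributable to the presence of $d$-faulty processes. Since we already know (from \cite{fischer1982lower}) that $b+1$ rounds cannot suffice when there are Byzantine processes, and since here $m,d>0$, the new content is to show that even when $b=0$ (where $b+1=1$ round would be the classical bound, trivially insufficient anyway) we need at least $b+2=2$ rounds — and more generally to push the classical $b$-round chain argument one round further. First I would set up a $(b+1)$-round scenario and assume, for contradiction, that a $(b+1)$-round algorithm $F$ solves Byzantine agreement in the $(n,m,d,b)$-system.

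Next I would build two scenarios $\alpha$ and $\beta$, together with a chain of intermediate scenarios $\gamma_0 = \alpha, \gamma_1, \ldots, \gamma_t = \beta$, such that (i) $\alpha$ forces output $0$ and $\beta$ forces output $1$ by Validity (choosing a non-Byzantine transmitter with input $0$, resp.\ $1$), and (ii) consecutive scenarios $\gamma_j, \gamma_{j+1}$ are indistinguishable to some non-Byzantine process $r_j$, so that $F(\gamma_{j,r_j}) = F(\gamma_{j+1,r_j})$, which by Agreement propagates the output value and yields $0 = 1$. The standard Byzantine chain has length governed by $b$ (each link ``spends'' one Byzantine process at the transmitter layer); the key new idea is that with only $b+1$ rounds available the adversary runs out of rounds to hide the discrepancy — informally, in a $(b+1)$-round execution a process's view is a tree of depth $b+1$, and a single $d$-faulty process appearing in the chain needs to corrupt messages consistently across all remaining rounds to keep the two scenarios indistinguishable, but after exhausting the $b$ Byzantine processes in the chain there is exactly one round left, and in that last round a $d$-faulty process can only mislead $d < n-1$ receivers, so there is always a non-Byzantine witness $r$ that sees the true value and splits the two scenarios. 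Quantitatively, the partition of $P$ used to define $\alpha,\beta$ (analogous to the five-set partition $G,H,I,J,K$ in the proof of Lemma \ref{lemma-impsb1}, but now tracking round-depth) must respect $m,d,b$ and the hypothesis $m,d>0$, which is exactly what guarantees nonempty ``confusion sets'' in each round.

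Concretely, I would mimic the recursive scenario construction in the proof of Lemma \ref{lemma-impsb1}: define $\alpha$ and $\beta$ recursively on string length, letting a set of $d$-faulty processes play the role of the adversarial interface between the $0$-world and the $1$-world, and using the $b$ Byzantine processes to absorb the first $b$ rounds of the chain; then argue that at round $b+1$ there is a process $k$ with $\alpha_k = \beta_k$, hence $F(\alpha_k) = F(\beta_k)$, contradicting $F(\alpha_k)=0$, $F(\beta_k)=1$ obtained from Validity and Agreement. The main obstacle I anticipate is the bookkeeping in defining $\alpha$ and $\beta$ so that each is simultaneously (a) a legal scenario of a $(n,m,d,b)$-system — in particular that in every round each $d$-faulty process misleads at most $d$ receivers and each purportedly correct process is genuinely consistent — and (b) arranged so that the depth-$(b+1)$ views coincide at some non-Byzantine process; threading these constraints through $b$ rounds of Byzantine behavior plus one round of $d$-faulty behavior, while keeping the confusion sets nonempty using $m,d>0$, is the delicate part. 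The rest — deriving the contradiction from Validity and Agreement once $\alpha_k=\beta_k$ is established — is routine, exactly as in Lemma \ref{lemma-impsb1}.
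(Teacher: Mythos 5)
Your overall plan --- a chain of scenarios whose endpoints are pinned to $0$ and $1$ by Validity and whose consecutive elements are indistinguishable to some non-Byzantine process --- is the right family of argument, and it is indeed the one the paper uses. But the proposal stops exactly where the proof begins, and the sketch you do give points at the wrong kind of chain. You propose to mimic the recursive two-scenario construction of Lemma \ref{lemma-impsb1}, with a fixed partition of $P$ into confusion sets and with each link of the chain ``spending'' one Byzantine process. That is the structure of a \emph{resilience} lower bound: it exploits $n\leqslant 2m+d+2b$ and gives a chain of bounded length. For a \emph{round} lower bound, where $n$ may be arbitrarily large relative to $m,d,b$, this does not work. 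The paper instead uses the long-chain argument of \cite{fischer1982lower}: it defines $n^{b+1}+2$ candidate views $\alpha_x : p_0P^{0:b}\to\{0,1\}$ by $\alpha_x(w)=0$ iff $\overline{w}<x$ (reading $w$ as a radix-$n$ numeral), notes $F(\alpha_0)=1$ and $F(\alpha_{n^{b+1}+1})=0$, and shows that each adjacent pair $\alpha_x,\alpha_{x+1}$ --- which differ only at strings $w$ with $\overline{w}=x$, in particular at $w=q_0\ldots q_b$ where $x=\overline{q_0\ldots q_b}$ --- occurs as the views of two witnesses $q_{b+1},q_{b+2}$ in a single legal scenario. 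Every link of this chain reuses the \emph{entire} fault budget; nothing is consumed as the chain progresses. None of this combinatorial structure appears in your proposal, and it is the whole content of the proof.

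The second gap is in your key intuition, which runs in the wrong direction. You argue that ``after exhausting the $b$ Byzantine processes there is one round left, and a $d$-faulty process can only mislead $d<n-1$ receivers, so there is always a non-Byzantine witness that sees the true value and splits the two scenarios.'' For a lower bound you need the opposite conclusion: the adversary must \emph{succeed} in making the two worlds coincide at some process. The actual reason $b+1$ rounds fail when $m,d>0$ is that linking $\alpha_x$ to $\alpha_{x+1}$ requires all $b+1$ relays $q_0,\ldots,q_b$ on the differing string to be faulty, but the last relay $q_b$ only needs to send a wrong value to a \emph{single} process (namely $q_{b+2}$), which a $d$-faulty process with $d\geqslant 1$ can legally do; the first $b$ relays are absorbed by the Byzantine budget. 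This is exactly the scenario $\phi$ the paper constructs, with Byzantine set $\{q_0,\ldots,q_{b-1}\}$ and $d$-faulty set $\{q_b\}$, together with the case analysis for when $\{q_0,\ldots,q_b\}$ has fewer than $b+1$ distinct elements or does not begin with $p_0$. Without that construction and its admissibility check, the proposal does not establish the theorem.
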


\begin{proof}
  Suppose in contrary that there is a $(b + 1)$-round Byzantine agreement
  algorithm $F$. For any string $w$, we use $\bar{w}$ to denote the number
  corresponding to $w$ with radix $n$.
  
  Select an arbitrarily process $p_0$ in the system as a fixed transmitter.
  For $0 \leqslant x \leqslant n^{b + 1} + 1$, define $\alpha_x : p_0 P^{0 :
  b} \rightarrow \{ 0, 1 \}$ as
  \[ \tmop{for} w \in p_0 P^{0 : b} \nocomma, \alpha_x (w) = \left\{
     \begin{array}{ll}
       0 & \tmop{if} \overline{w} < x,\\
       1 & \tmop{otherwise.}
     \end{array} \right. \]
  It is easy to see that $\alpha_0 (w)$ is always equal to $1$, so $F
  (\alpha_0) = 1$. For the same reason, $F (\alpha_{n^{b + 1} + 1}) = 0$. We
  claim: $\alpha_x$ and $\alpha_{x + 1}$ are views derived from a same
  scenario for all $1 \leqslant x \leqslant n^{b + 1}$. If so, by the
  agreement property of $F$ we have $F (\alpha_x) = F (\alpha_{x + 1})$. Then,
  we have $F (\alpha_0) = F (\alpha_1) = \ldots = F (\alpha_{n^{b + 1} + 1})$.
  This is a contradiction to $F (\alpha_0) = 1$ and $F (\alpha_{n^{b + 1} +
  1}) = 0$. Now it remains to prove the claim.
  
  For $1 \leqslant x \leqslant n^{b + 1}$, let $x = \overline{q_0 q_1 \ldots
  q_b}$. Since $n > b + 3$, there exists two different processes $q_{b + 1}$
  and $q_{b + 2}$ (assume $q_{b + 1} > q_{b + 2}$ without loss of generality)
  in $P \backslash \{ q_0 \ldots q_b \}$. Define a function $\sigma : p_0 P^{0 : b +
  1} \rightarrow \{ 0, 1 \}$ as
  \[ \tmop{for} w \in p_0 P^{0 : b + 1} \nocomma, \sigma (w) = \left\{
     \begin{array}{ll}
       0 & \tmop{if} p_0 < q_0,\\
       0 & \tmop{if} p_0 = q_0 \tmop{and} w = q_0 \ldots q_i q s, \tmop{with}
       0 \leqslant i \leqslant b, q < q_{i + 1},\\
       1 & \tmop{otherwise.}
     \end{array} \right. \]
  It is easy to check that $\sigma_{q_{b + 1}} = \alpha_x$ and $\sigma_{q_{b +
  2}} = \alpha_{x + 1}$. If $p_0 < q_0$, then $\sigma (w)$ is always equal to
  $0$. So $\alpha_x$ and $\alpha_{x + 1}$ come from an admissible scenario
  $\sigma$. If $p_0 > q_0$, for the similar reason the claim is correct. If
  $q_0 = p_0$, then for every process $p$ in $P \backslash \{ q_0, \ldots, q_b \}$ we
  always have $\sigma (w p q) = \sigma (w p)$. If the set $\{
  q_0, \ldots, q_b \}$ has less than $b$ elements, then let $B (\sigma) = \{
  q_0, \ldots, q_b \}$ and $\sigma$ is a $(b + 1)$-round scenario. Thus
  $\alpha_x$ and $\alpha_{x + 1}$ come from an admissible scenario $\sigma$.
  If the set $\{ q_0, \ldots, q_b \}$ has $b + 1$ different elements, then let
  $\phi$ be as follows:
  \[ \tmop{for} w \in p_0 P^{0 : b + 1} \nocomma, \phi (w) = \left\{
     \begin{array}{ll}
       1 & \tmop{if} w = q_0 \ldots q_b q \tmop{with} q < q_{b + 1} \tmop{and}
       q \neq q_{b + 2},\\
       \sigma (w) & \tmop{otherwise.}
     \end{array} \right. \]
  $\phi$ is a $(b + 1)$-round scenario with Byzantine processes $\{ q_0,
  \ldots, q_{b - 1} \}$ and d-faulty processes $\{ q_b \}$. Also we have
  $\phi_{q_{b + 1}} = \sigma_{q_{b + 1}} = \alpha_x$ and $\phi_{q_{b + 2}} =
  \sigma_{q_{b + 2}} = \alpha_{x + 1}$. Thus $\alpha_x$ and $\alpha_{x + 1}$
  come from an admissible scenario $\phi$. Hence, the claim we mentioned is
  always correct. So the theorem follows.
\end{proof}

Now we show that $b + 3$ could be lower bound in certain cases. Specifically,
suppose $b = 0$, we prove that $3$ rounds is a lower bound.
\begin{lemma}
Suppose $m, d > 0$ and $max \{2 m + d, 2 d + m\} < n < 2 m + 2 d$, then there is no
$2$-round Byzantine agreement algorithm for a (n, m, d, 0)-system.
\end{lemma}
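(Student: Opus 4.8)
The plan is to assume, for contradiction, that $F$ is a $2$-round Byzantine agreement algorithm for the $(n,m,d,0)$-system, and to build two admissible scenarios $\gamma$ and $\gamma'$ with $\gamma(p_0)=0$ and $\gamma'(p_0)=1$ that induce exactly the same view at some process $v$. Since $b=0$, no process is ever fully Byzantine, so in each of $\gamma,\gamma'$ the transmitter $p_0$ is non-Byzantine; hence \textbf{Validity} forces $F(\gamma_v)=\gamma(p_0)=0$ and $F(\gamma'_v)=\gamma'(p_0)=1$, which is impossible once $\gamma_v=\gamma'_v$.

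The structural fact driving the construction is that in a $2$-round execution a \emph{correct} relayer $q$ forwards in the second round the single value $\sigma(p_0 q)$ to every recipient, hence contributes the same entry to every process's view. So if $v$'s view is to coincide in $\gamma$ and $\gamma'$, each process $q\neq v$ must absorb the discrepancy between the two scenarios in one of only two ways: either $q$ is $d$-faulty in $\gamma$ or in $\gamma'$ (so it may send $v$ different round-two messages), or $q$ receives a corrupted round-one message directly from $p_0$ in one of the scenarios, which requires $p_0$ itself to be $d$-faulty there. I would therefore make $p_0$ $d$-faulty in both $\gamma$ and $\gamma'$, fix a witness $v\neq p_0$, and partition $P\setminus\{p_0,v\}$ into: round-one victims of $p_0$ in $\gamma$ (at most $d-1$, since $v$ must itself be such a victim in order to match the direct message it receives), round-one victims of $p_0$ in $\gamma'$ (at most $d$), processes $d$-faulty in $\gamma$ only (at most $m-1$, as $p_0$ already fills one $d$-faulty slot), and processes $d$-faulty in $\gamma'$ only (at most $m-1$). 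Such a partition exists precisely when $n-2\le(d-1)+d+(m-1)+(m-1)=2m+2d-3$, i.e.\ when $n<2m+2d$; the hypotheses $m,d>0$ make the four caps non-negative, and the bound $n>\max\{2m+d,2d+m\}$ is used only to delimit the regime in which agreement is solvable at all.

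With the partition chosen I would spell out $\gamma$ and $\gamma'$: in $\gamma$, $p_0$ holds $0$, sends $1$ to its victim set (which contains $v$) and $0$ elsewhere, correct processes relay truthfully, and the processes $d$-faulty in $\gamma$ lie to $v$ alone, relaying $1$; scenario $\gamma'$ is the mirror image with $0$ and $1$ interchanged and the two $d$-faulty classes swapped. A short case analysis over the four classes — together with the direct message from $p_0$ and the (degenerate) self-relay terms — then verifies that in \emph{both} scenarios $v$ sees $1$ from every ``$\gamma$-side'' process and $0$ from every ``$\gamma'$-side'' process, so $\gamma_v=\gamma'_v$; and one checks that each scenario is admissible (at most $m$ processes $d$-faulty, each deviating on at most $d\ge1$ links per round, and $B(\gamma)=B(\gamma')=\emptyset$).

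The one genuinely delicate step is this last bookkeeping: matching $v$'s view forces certain relayers to lie to $v$, which consumes part of their $d$-link budget, so I must confirm the budget is never overspent while the views still agree. This works out because each such relayer needs to lie to $v$ only — its messages to all other recipients are left truthful and unconstrained — so a single corrupted link per round suffices, which is exactly why $d>0$ (and nothing stronger) is all the argument requires.
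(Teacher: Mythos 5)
Your proposal is correct and follows essentially the same route as the paper's proof: two mirror-image scenarios in which the transmitter is $d$-faulty in both, the remaining processes partitioned into round-one victim classes (size at most $d$) and $d$-faulty relayer classes (size at most $m-1$) for each scenario, and validity applied at a witness whose views coincide. The paper uses a whole witness set $I$ with $0<|I|\leqslant d$ in place of your single process $v$ (so its relayers lie to all of $I$ rather than to $v$ alone), but the bookkeeping and the resulting count $n\leqslant 2m+2d-1$ are identical.
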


\begin{proof}
  Let $F$ be a $2$-round Byzantine agreement algorithm. Select an arbitrarily
  process $p_0$ in $P$ as transmitter. By the assumption of the lemma, $P \backslash
  p_0$ can be partitioned into four sets $G$, $H$, $I$ and $J$ such
  that $| G | \leqslant m - 1$, $| H | \leqslant m - 1$, $0 < | I | \leqslant d$,
  $0 < | J | \leqslant d$.
  We define two $2$-round scenarios $\alpha$ (with d-faulty processes in $G \cup \{ p_0 \}$) and $\beta$ (with 
  d-faulty processes in $H \cup \{ p_0 \}$)
  as follows.
  \begin{enumerateroman}
    \item For every $i \in I$, $j \in J$, $q_i \in P \backslash I$, $q_j \in P \backslash J$ let
    \[ \alpha (p_0) = 0, \alpha (p_0 i) = 1, \alpha (p_0 q_i) = \alpha (p_0),
    \]
    \[ \beta (p_0) = 1, \beta (p_0 j) = 0, \beta (p_0 q_j) = \beta (p_0), \]
    \item For every $g \in G $, $h \in H $, $i \in I$, $q_g
    \in P \backslash (G \cup \{ p_0 \})$, $q_h \in P \backslash (H \cup \{ p_0 \})$, $q_i \in P
    \backslash I$, $p \in P$ let
  \[ \alpha (p_0 p_0 i) = \beta (p_0 p_0 i) = 1, \]
  \[ \alpha (p_0 g i) = 1, \alpha (p_0 g q_i) = \alpha (p_0 g), \alpha (p_0
     q_g p) = \alpha (p_0 q_g), \]
  \[ \beta (p_0 h i) = 0, \beta (p_0 h q_i) = \beta (p_0 h), \beta (p_0 q_h p)
     = \beta (p_0 q_h) . \]
  \end{enumerateroman}
  In the construction, $\alpha_i = \beta_i$ for all $i \in I$. Thus for any $i \in I$,
  \[ 0 = \alpha (p_0) = F (\alpha_i) = F (\beta_i) = \beta (p_0) = 1, \]
  giving a contradiction.
\end{proof}


\section{Concluding Remarks}\label{sec-rel}

There have been several attempts  to overcome the need for three-times redundancy 
in Byzantine agreement
{\cite{rabin1989verifiable,fitzi2000partial,ravikant2004byzantine,considine2005byzantine,jaffe2012price}}.
Several researchers considered stronger communication models such as
broadcast channels. In the synchronous setting, Rabin and Ben-Or
{\cite{rabin1989verifiable}} introduced the notion of global broadcast
channel and showed that any multiparty computation could be achieved with
two-times redundancy only. A partial broadcast channel was defined by Fitzi
and Maurer {\cite{fitzi2000partial}}, and corresponding lower bounds for
reaching Byzantine agreement were presented in 
{\cite{ravikant2004byzantine,considine2005byzantine,jaffe2012price}}.
Problems of secure communication and
computation in the presence of a Byzantine adversary within an {\cite{dolev1982byzantine,franklin1998secure}} incomplete network
have also been studied {\cite{dolev1982byzantine,franklin1998secure}}.

Accounting for the fact that communication failures sometimes dominate
computation ones (due to the high reliability of hardware and operating
systems), some models focused on communication failures
{\cite{perry1986distributed,schmid2002formally}} or hybrid failures
{\cite{gong1998byzantine,lincoln1993formally}}. These include models where 
the Byzantine components are the communication channels
instead of (or in addition to) the processes. For instance, in
{\cite{santoro1989time,santoro2007agreement}}, Santoro and Widmayer showed
that agreement cannot be achieved with $\left\lceil \frac{n-1}{2} \right\rceil$
Byzantine communication faults. 
Our Theorem \ref{thm-bound}
generalizes this result. Actually in Theorem \ref{thm-bound},
taking $m = \left\lceil \frac{n-1}{2} \right\rceil$, $d = 1$ and $b = 0$ would
force $n < 2 m + d + b$, which implies the impossibility of Byzantine
agreement.

\noindent
\newpage
\footnotesize

\bibliographystyle{unsrt}
\bibliography{ref}

\newpage
\normalsize
\appendices

\section{Proof of Lemma \ref{lemma-impsb2}}\label{app-omitted}

\begin{proof}
  Consider a Byzantine agreement algorithm $F$ for a
  (n, m, d, b)-system. Since $n \leqslant 2 d + m + 2 b$, $P$ can be
  partitioned into five non-empty sets $G$, $H$, $I$, $J$ and $K$, with $| G |
  \leqslant m$, $| H | \leqslant d$, $| I | \leqslant d$, $| J | \leqslant b$,
  $| K | \leqslant b$. Select an arbitrarily process in $G$ as transmitter
  $p_0$. We define scenarios $\alpha$ and $\beta$ recursively as follows:
  \begin{enumerateroman}
    \item For every $h \in H$, $i \in I$, $q_{\alpha} \in P \backslash H$, $q_{\beta}
    \in P \backslash I$ let
    \[ \alpha (p_0) = 0, \alpha (p_0 h) = 1, \alpha (p_0 q_{\alpha}) = 0, \]
    \[ \beta (p_0) = 1, \beta (p_0 i) = 0, \beta (p_0 q_{\beta}) = 1, \]
    \item For every $g \in G$, $h \in H$, $i \in I$, $j \in J$, $k \in K$, $p
    \in P$, $q_{\alpha} \in P \backslash H$, $q_{\beta} \in P \backslash I$, $w \in p_0 P^{\ast}$, define the following values recursively on the
    length of $w$:
    \[ \alpha (w h p) = \alpha (w g) \nocomma, \alpha (w i p) = \alpha (w i),
       \alpha (w k p) = \alpha (w k), \]
    \[ \beta (w h p) = \beta (w h) \nocomma, \beta (w i p) = \beta (w i),
       \beta (w j p) = \beta (w j), \]
    \[ \alpha (w g q_{\alpha}) = \alpha (w g), \alpha (w g h) = \beta (w g),
       \alpha (w j p) = \beta (w j p), \]
    \[ \beta (w g q_{\beta}) = \beta (w g), \beta (w g i) = \alpha (w g),
       \beta (w k p) = \alpha (w k p) . \]
  \end{enumerateroman}
  It is easy to check that $\alpha$ is a scenario of a (n,m,d,b)-system with d-faulty processes in $G$ and Byzantine processes in $J$, and that $\beta$ is a scenario of a (n,m,d,b)-system with d-faulty processes in $G$ and Byzantine processes in $K$.

  In the construction, $\alpha_h =
  \beta_h$ and $\alpha_i = \beta_i$ for all $h \in H$ and $i \in I$. Thus for
  any $h \in H$,
  \[ 0 = \alpha (p_0) = F (\alpha_h) = F (\beta_h) = \beta (p_0)
     = 1, \]
  giving a contradiction.
\end{proof}

\section{Byzantine Agreement with Signed Messages\label{app-sba}}

We consider that processes send signed messages. Following
{\cite{lamport1982byzantine}}, a {\tmem{signed message}} satisfies the
following two properties:
\begin{enumerate}
  \item A non-Byzantine process's signature cannot be forged and any alteration of
  the content of its signed messages can be detected.
  
  \item Any process can verify the authenticity of a process's signature.
\end{enumerate}
Formally, suppose $\sigma$ is a $k$-round scenario for a (n, m, d, b)-system
with signed messages. Let $\sigma (p_0 p_1 \ldots p_i p)$ ($i < k$) be a
message received by process $p$. If process $p_j$ ($j \leqslant i$) is
non-Byzantine, then either
\begin{enumeratenumeric}
  \item $\sigma (p_0 \ldots p_i p) = \sigma (p_0 \ldots p_j)$, or
  
  \item the signature of $p_j$ is forged.
\end{enumeratenumeric}
In this new setting, we have the following main result:

\begin{theorem}
  \label{thm-app-sba}Byzantine agreement can be solved for a (n, m, d, b)-system
  with signed messages if and only if $n > m + d + b$.
\end{theorem}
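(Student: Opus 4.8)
Since Theorem~\ref{thm-app-sba} is an ``if and only if'', the plan is to treat the two directions separately: sufficiency by analysing Algorithm~\ref{algo-sba1} ($\mathrm{SBA}$++), and necessity by an indistinguishability argument in the spirit of Lemmas~\ref{lemma-impsb1}--\ref{lemma-impsb2}. Throughout I would assume $m,d>0$; the degenerate case $m=d=0$ is the classical signed setting, where $n>b$ suffices by \cite{lamport1982byzantine} and $\mathrm{SBA}$++ reduces to their protocol.

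\emph{Sufficiency ($n>m+d+b$).} Fix a non-Byzantine $p$ and let $S_p$ be the set it builds. Split on whether the transmitter $p_0$ is Byzantine. If $p_0$ is non-Byzantine, the signed-message property says that any validly signed message whose chain begins with the (unforgeable) signature of $p_0$ carries the value $\sigma(p_0)$, so $S_p\subseteq\{\sigma(p_0)\}$ at once; for $\sigma(p_0)\in S_p$ I would note that $p_0$ sends its genuine, correctly signed value to at least $n-1-d$ processes, of which at least $n-m-b-d\ge 1$ are correct, so some correct $q$ relays a correctly signed ``$p_0q$'' message to $p$. Hence $S_p=\{\sigma(p_0)\}$ and $p$ outputs $\sigma(p_0)$, giving Validity and Agreement simultaneously. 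If $p_0$ is Byzantine, Validity is vacuous and it remains to show $S_p=S_{p'}$ for all non-Byzantine $p,p'$. Given $v\in S_p$, witnessed by a correctly signed chain $r_0r_1\cdots r_j$ with $r_0=p_0$ and $j\le b+1$, I would take the first non-Byzantine $r_\ell$ on that chain (if the whole chain is Byzantine, let $p$ itself play the role of $r_\ell$); then $\ell\le b$ and $r_\ell$ holds the correctly signed prefix message with value $v=\sigma(r_0\cdots r_\ell)$. If $r_\ell$ is correct it relays this to $p'$; if $r_\ell$ is $d$-faulty it relays it to at least $n-1-d$ processes, among which at least $n-m-b-d\ge 1$ are correct, and that correct process extends the chain by one signature and forwards it to $p'$. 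Since $\ell\le b$, the chain reaching $p'$ has at most $b+2$ signatures, hence is still examined within the $b+2$ rounds; thus $v\in S_{p'}$, and by symmetry $S_p=S_{p'}$.

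\emph{Necessity ($n\le m+d+b$).} Using $n\le m+d+b$ and $d\le n-2$, I would partition $P=K\sqcup D\sqcup I$ with $1\le|K|\le d$, $1\le|D|\le m$, $|I|\le b$ and $p_0\in D$ (concretely $|K|=\max\{1,n-m-b\}\le d$). Define two scenarios $\alpha$ and $\beta$ with the \emph{same} fault structure --- $d$-faulty set $D$, Byzantine set $I$ --- in which the transmitter has initial value $0$ in $\alpha$ and $1$ in $\beta$, and in which every process of $D\cup I$ (including $p_0$, which simply withholds its own value) sends nothing --- the detectable absence of a message --- to every process of $K$ in every round, all other messages being the honest ones. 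Because $|K|\le d$ this is an admissible use of the $d$-faulty links, and $I$ may do anything. One then checks that the view of every $k\in K$ is identical in $\alpha$ and $\beta$ (everything originating outside $K$ is absent, and messages among members of $K$ carry no information about $p_0$'s value), so $F(\alpha_k)=F(\beta_k)$. But $p_0$ is non-Byzantine in both scenarios, so Validity forces $F(\alpha_k)=0$ and $F(\beta_k)=1$, a contradiction.

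The main obstacle, and the place where both $n>m+d+b$ and the $b+2$ round count are really used, is the sufficiency argument under a Byzantine transmitter: one must show that a value supported only by an adversarially chosen signature chain still reaches every non-Byzantine process, and in particular that the single extra relay hop forced by a $d$-faulty process on that chain does not push its length past $b+2$ --- this is why locating the first non-Byzantine process on the chain at position $\le b$ is essential. It is worth remarking that the necessity construction is genuinely different from the oral-model ones: with signed messages a faulty process cannot equivocate with a valid signature, so the splicing of $\alpha$ into $\beta$ used in Lemmas~\ref{lemma-impsb1}--\ref{lemma-impsb2} is unavailable; instead the argument exploits that $m+b$ suppressing processes together with $d$ suppressed links suffice to cut the set $K$ off entirely.
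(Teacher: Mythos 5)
Your proof is correct and follows essentially the same route as the paper's: the same case analysis of $\mathrm{SBA}$++ (a non-Byzantine transmitter forces $S=\{\sigma(p_0)\}$ because its signature cannot be forged and at least one correct process relays it; a Byzantine transmitter is handled by relaying through the first non-Byzantine signer $p_l$ with $l\le b$ on the chain, using $n-m-b>d$), and the same impossibility construction that uses the $m$ d-faulty and $b$ Byzantine processes to cut a set of at most $d$ processes off from any information about the transmitter's value. The only substantive difference is that you explicitly treat the case of a validly signed chain consisting entirely of Byzantine signers (letting $p$ itself serve as the first non-Byzantine relay), a corner case the paper's own argument leaves implicit.
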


\begin{algorithm}[h]
    {\tmstrong{Assume}}: $\sigma_p$ is a $(b + 2)$-round view of process $p$
    for a (n, m, d, b)-system with signed messages, and $p_0$ is the
    transmitter.
    
    {\tmstrong{Code}} for $p$:
    \begin{fkindent}
      \begin{enumeratenumeric}
        \item $p$ initializes an empty set $S$.
        
        \item For every string $p_0 \ldots p_i$ ($0 \leqslant i \leqslant b +
        1$, and $p_0, \ldots, p_i$ are different processes): if the signatures
        attached to value $\sigma_p (p_0 \ldots p_i)$ are correct, then $p$
        adds $\sigma (p_0 p_1 \ldots p_i)$ into $S$.
        
        \item $p$ outputs the majority value of $S$.
      \end{enumeratenumeric}
    \end{fkindent}
    \caption{Algorithm $\tmop{SBA}$++\label{algo-sba}}
\end{algorithm}

\begin{lemma}
  $\tmop{SBA}$++ (Algorithm \ref{algo-sba}) solves Byzantine agreement for a (n,
  m, d, b)-system with signed messages if $n > m + d + b$.
\end{lemma}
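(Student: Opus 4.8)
The plan is to read off the three requirements of Byzantine agreement directly from the code of SBA++ (Algorithm \ref{algo-sba}), splitting on whether the transmitter $p_0$ is non-Byzantine. Termination is immediate: after the $b+2$ rounds every non-Byzantine $p$ has a well-defined set $S$ and writes the output of Line $3$ exactly once. The two substantive properties are Validity (in the case $p_0 \nin B(\sigma)$) and Agreement, and both are driven by the signed-message property together with the hypothesis $n > m+d+b$, which I would use in the following form: since the number of correct processes is strictly greater than $d$, the genuine round-message of any non-Byzantine process $r$ is delivered to at least one correct process distinct from $r$ (a d-faulty $r$ corrupts at most $d$ of its outgoing links, so it still reaches some correct process).

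For Validity, assume $p_0 \nin B(\sigma)$ and let $p$ be any non-Byzantine process. Every correctly-signed chain $p_0 p_1 \ldots p_i$ in $p$'s view has the non-Byzantine link $p_0$ on it, so by the signed-message property it carries the value $\sigma(p_0)$; hence the only value $p$ can ever insert into $S$ is $\sigma(p_0)$. To see that $S$ is non-empty: if $p_0$ is correct then $\sigma(p_0)$ enters $S$ via the length-one chain $p_0$; if $p_0$ is d-faulty, applying the numerical fact to $r = p_0$ gives a correct $q \neq p_0$ with $\sigma(p_0 q) = \sigma(p_0)$, which $q$ relays with a genuine signature, so the length-two chain $p_0 q$ is correctly signed in every non-Byzantine view and carries $\sigma(p_0)$. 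Thus $S = \{\sigma(p_0)\}$ and $p$ outputs $\sigma(p_0)$; this also disposes of Agreement when $p_0 \nin B(\sigma)$.

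For Agreement when $p_0 \in B(\sigma)$, I would show $S_p = S_{p'}$ for any two non-Byzantine processes $p, p'$; by symmetry it suffices to prove $S_p \subseteq S_{p'}$. Let $v$ be inserted by $p$ via a correctly-signed chain $p_0 \ldots p_k$ ($k \leqslant b+1$, distinct processes). Let $p_l$ be the first non-Byzantine link on the chain as it appears in $p$'s view (such a link exists, since $p$ itself qualifies as its final link); then $p_0, \ldots, p_{l-1}$ are all Byzantine, so $l \leqslant b$, and $v = \sigma(p_0 \ldots p_l)$ by the signed-message property. Applying the numerical fact to $r = p_l$, there is a correct $q$ with $q \neq p_l$ and $q \nin \{p_0, \ldots, p_l\}$ (the earlier nodes are Byzantine) receiving $\sigma(p_0 \ldots p_l q) = \sigma(p_0 \ldots p_l) = v$ from $p_l$; being correct, $q$ forwards it intact with its genuine signature. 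Hence $p_0 \ldots p_l q$ is a chain of distinct processes of index $l+1 \leqslant b+1$, lying within the $b+2$ rounds of SBA++, that is correctly signed in $p'$'s view and carries $v$, so $v \in S_{p'}$. Therefore $S_p = S_{p'}$, all non-Byzantine processes take the same Line-$3$ majority, and Agreement holds.

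The main obstacle is the index bookkeeping in the third paragraph: one must verify (i) that $l \leqslant b$ — which is exactly where ``the prefix before the first non-Byzantine link is entirely Byzantine'' is used, and which is what keeps the extended chain $p_0 \ldots p_l q$ within the $b+2$ rounds available; (ii) that the forwarder $q$ can be chosen correct, distinct from $p_l$, and outside the current chain, which is the content of $n > m+d+b$ (the correct processes number strictly more than $d$, so one such $q$ survives all exclusions); and (iii) that ``correctly signed'' is preserved when a correct process relays a message it received intact, so that the new chain really triggers the insertion in $S_{p'}$. The degenerate case $m = d = 0$ is handled separately (it reduces to the signed-message algorithm of \cite{lamport1982byzantine}, and is trivial when at most one process is non-Byzantine). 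Once these points are pinned down, the three agreement conditions follow as above.
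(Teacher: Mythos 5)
Your proof follows essentially the same route as the paper's: the case split on whether $p_0$ is Byzantine, the observation that a non-Byzantine transmitter forces $S=\{\sigma(p_0)\}$ (non-empty because $n>m+d+b$ guarantees a correct relay even for a d-faulty $p_0$), and the $S_p\subseteq S_{p'}$ argument via the first non-Byzantine signer $p_l$ and a correct forwarder $q$ are exactly the paper's steps. One small caution: your justification that $p_l$ exists (``$p$ itself qualifies as its final link'') is off, since $p$ is the recipient rather than a signer on the chain $p_0\ldots p_k$; the all-Byzantine-chain case (possible when $k\leqslant b-1$) needs one extra relay step, through $p$ and then a correct $q$, to land the value in $S_{p'}$ --- a detail the paper's own proof also elides.
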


\begin{proof}
  First suppose the transmitter $p_0$ is non-Byzantine. By definition of a signed
  message, every message $\sigma_p (p_0 \ldots p_i)$ ($i \leqslant b + 1$) is
  either equal to $\sigma (p_0)$, or is detected as forged message. So set
  $S$ contains at most $\sigma (p_0)$. If $p_0$ is correct, then
  $\sigma_p (p_0) = \sigma (p_0)$ and $\sigma (p_0)$ is added into $S$. If
  $p_0$ is d-faulty, then there must be at least one correct process such that $\sigma (p_0 q) = \sigma (p_0)$
  since $n > m + d + b$. And then we have $\sigma (p_0 q p) = \sigma (p_0 q) =
  \sigma (p_0)$. According to Line $2$ of SBA++, $\sigma_p (p_0 q) =
  \sigma (p_0)$ is added into $S$. Therefore, $S$ contains a single value
  $\sigma (p_0)$. Consequently, if $p$ is non-Byzantine, $p$ outputs $\sigma (p_0)$.
  
  Now assume the transmitter $p_0$ is Byzantine.
  Let $S_p$ and $S_{p'}$ be the corresponding set $S$ initiated by $p$ and $p'$
  in Line $1$ of SBA++.
  We show that $S_p = S_{p'}$
  for any two non-Byzantine processes $p$ and $p'$. Suppose $\sigma_p (p_0 \ldots
  p_k)$ is included in $S_p$. Let $p_l$ ($l \leqslant k$) be the non-Byzantine
  process in $p_0 \ldots p_k$ with the smallest subscript. Then all processes
  in $p_0 \ldots p_{l - 1}$ are Byzantine, which implies $l \leqslant b$. Since
  the signatures attached to $\sigma_p (p_0 \ldots p_k)$ are correct, $\sigma
  (p_0 \ldots p_l) = \sigma_p (p_0 \ldots p_k)$. Since $n - m - b > d$, $p_l$
  sends $\sigma (p_0 \ldots p_l)$ to at least one correct process $q$. Then $\sigma_{p'} (p_0 \ldots p_l q)$
  is equal to $\sigma (p_0 \ldots p_l)$. According to Line $2$ of
  SBA++, $\sigma_{p'} (p_0 \ldots p_l q) = \sigma (p_0 \ldots p_l) =
  \sigma_p (p_0 \ldots p_k)$ is added into $S_{p'}$. Therefore, we have $S_p
  \subset S_{p'}$. Since $p$ and $p'$ are two arbitrary non-Byzantine process, we
  also have $S_{p'} \subset S_p$. That is to say $S_p = S_{p'}$. According to
  Line $3$ of SBA++, all non-Byzantine processes output a same value.
\end{proof}

\begin{proof}[Proof of Theorem \ref{thm-app-sba}]
  From the lemma above, we know that if $n > m + d + b$ then Byzantine
  agreement is solvable. Now we show that if $n \leqslant m + d + b$ then Byzantine
  agreement is impossible.
  
  Suppose by contradiction that $F$ is a Byzantine agreement algorithm for a
  (n, m, d, b)-system with signed messages and $n \leqslant m + d + b$. We separate
  the processes into three sets $G$, $H$ and $I$ such that $| G | \leqslant m$, $| H |
  \leqslant b$, $| I | \leqslant d$. Select an arbitrarily process in $G$ as transmitter $p_0$.
  We define the scenarios $\alpha$ and $\beta$ (both with Byzantine processes in $H$ and d-faulty processes in $G$) recursively as follows:
  \begin{enumerateroman}
    \item For every $i \in I$, $q \in P \backslash I$ let
    \[ \alpha (p_0) = 0, \alpha (p_0 i) = \bot, \alpha (p_0 q) = \alpha (p_0),
    \]
    \[ \beta (p_0) = 1, \beta (p_0 i) = \bot, \beta (p_0 q) = \beta (p_0) . \]
    \item For every $g \in G$, $h \in H$, $i \in I$, $p \in P$, $q \in P \backslash I$,
    $w \in p_0 P^{\ast}$, define the following values recursively on the
    length of $w$:
    \[ \alpha (w g i) = \beta (w g i) = \bot \nocomma, \alpha (w g q) = \alpha
       (w g) \nocomma, \beta (w g q) = \beta (w g), \]
    \[ \alpha (w h p) = \alpha (w i p) = \beta (w h p) = \beta (w i p) = \bot
       . \]
  \end{enumerateroman}
  Moreover, $\alpha_i = \beta_i$ for all $i \in I$ since $\alpha_i (w) =
  \beta_i (w) = \bot$ for all string $w \in p_0 P^{\ast}$. Thus for any $i \in
  I$,
  \[ 0 = \alpha (p_0) = F (\alpha_i) = F (\beta_i) = \beta (p_0) = 1, \]
  giving a contradiction.
\end{proof}

\section{Early Decision}\label{app-early}

The work in \cite{dolev1982polynomial,krings1999byzantine} showed that processes
could make an early decision if the number of actual Byzantine failures is less than the
maximal number of failures it can tolerate. We show here how we can achieve early
decision with partial Byzantine failures.

\begin{theorem}
Consider a (n, m, d, b)-system ($m, d > 0$) and $f$ denotes the number of actual
Byzantine processes during an execution. Then Byzantine agreement can be solved in
the following number of rounds:
\begin{itemize}
\item $min \{ 2(f + 2), 2(d + 1) \}$, if $n \geqslant max\{2m+2d, b + 1\} + 2b$,
\item $min \{ 3(f + 2), 3(d + 1) \}$, if $n > max\{2m+d, 2d+m, b\} + 2b$.
\end{itemize}
\end{theorem}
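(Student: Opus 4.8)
The plan is to obtain early decision by combining two ingredients: (1) a modification of the View-Transform/Local-Majority machinery so that it terminates after $O(d)$ rounds regardless of $b$, and (2) the standard early-stopping Byzantine agreement algorithm of \cite{dolev1982polynomial,krings1999byzantine} which decides in $O(f)$ rounds when only $f$ actual Byzantine failures occur. First I would handle the View-Transform part. Observe that after the first iteration of $ST$ with respect to $LM_3$ (resp. $LM_2$), Lemma~\ref{lemma-scenario} already shows that every non-Byzantine process sends correct messages in all subsequent rounds of the transformed scenario; the only reason the loop in Algorithm~\ref{algo-view} runs $k-3$ (resp.\ $k-2$) times is to propagate this correction back through all earlier rounds. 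But $LM_3$ (resp.\ $LM_2$) needs only $3$ (resp.\ $2$) rounds of raw information to filter out partial faults in a single layer; and because partial faults are \emph{dynamic} but still bounded by $d$ links per round, after $d+1$ layers of majority filtering a bad value can no longer survive in any non-Byzantine view. So the View-Transform phase can be capped at $2(d+1)$ rounds in the $LM_2$ regime and $3(d+1)$ rounds in the $LM_3$ regime, yielding scenarios of a $(n,0,0,b)$-system on which a classical early-stopping Byzantine agreement can be run.

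Next I would invoke the early-stopping algorithm on the transformed scenario. By Lemmas~\ref{lemma-lm3}, \ref{lemma-lm2} and \ref{lemma-scenario}, the output of this bounded View-Transform applied by all non-Byzantine processes is a single admissible scenario of a $(n,0,0,b)$-system which moreover preserves a non-Byzantine transmitter's input. Feeding this scenario into the $O(f)$-round early-deciding Byzantine agreement algorithm of \cite{dolev1982polynomial} (which decides within $f+2$ rounds when $f$ Byzantine faults actually occur, and always within $b+1$) gives, by the argument of the proof of Theorem~\ref{thm-upper-bound}, the agreement and validity properties. Counting rounds: each round of the inner algorithm is simulated by $2$ real rounds in the $LM_2$ regime and $3$ in the $LM_3$ regime (this is exactly the ``a couple more rounds'' overhead already incurred in Theorem~\ref{thm-upper-bound}), and the inner algorithm runs for at most $\min\{f+2,\, b+1\}$ rounds while the View-Transform preamble costs the first $2(d+1)$ (resp.\ $3(d+1)$) of those real rounds anyway. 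Reconciling the two caps — the $d+1$ bound from partial-fault filtering and the $f+2$ bound from early stopping — and taking the minimum gives $\min\{2(f+2),\,2(d+1)\}$ in the first case and $\min\{3(f+2),\,3(d+1)\}$ in the second.

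The main obstacle I expect is showing that the View-Transform phase can be \emph{safely truncated} to $d+1$ layers while still producing a \emph{consistent} scenario across all non-Byzantine processes: the delicate point in Lemma~\ref{lemma-scenario} is that the full loop is what guarantees $(ST(\sigma))_p = VT^p(\sigma_p)$ lands in a $(n,0,0,b)$-scenario, and truncating it means I must re-prove the inductive claim only over the last $d+1$ layers, relying on the fact that a value corrupted on at most $d$ links per round in $d+1$ successive rounds is always out-voted by the $n-m-b-1 > m+d+b-1$ (resp.\ the $LM_2$ count) clean copies. I would also need to argue that an early-deciding inner algorithm remains correct when fed a scenario that was itself produced by a process-local transform — i.e., that View-Transform and early stopping compose — but this follows from the same scenario-indistinguishability reasoning already used, since the inner algorithm only ever sees transformed views. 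Finally I would double-check the boundary $m=d=0$ is excluded (the hypothesis $m,d>0$ is assumed) and that in the knife-edge regime $\max\{2m+d,2d+m,b\}+2b < n < \max\{2m+2d,b+1\}+2b$ only the $LM_3$-based bound applies, matching the two-case split in the statement.
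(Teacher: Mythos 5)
Your architecture --- a View-Transform preamble truncated to $d+1$ layers, followed by an early-stopping algorithm run on the transformed scenario --- diverges from the paper's construction and contains a genuine gap. The paper does not truncate View-Transform at all; it discards the two-phase structure entirely and instead observes that a single application of $LM_3$ (resp.\ $LM_2$) to a $3$-round (resp.\ $2$-round) exchange already yields a \emph{reliable broadcast} primitive: every non-Byzantine process recovers the value of a non-Byzantine sender even when that sender is $d$-faulty (this is exactly the second part of Lemma~\ref{lemma-lm3}, resp.\ Lemma~\ref{lemma-lm2}, applied at the top level). It then takes an off-the-shelf early-deciding Byzantine agreement algorithm, which runs in $\min\{f+2,b+1\}$ rounds, and replaces each of its one-round broadcasts by this $3$-round (resp.\ $2$-round) primitive, giving $\min\{3(f+2),3(b+1)\}$ (resp.\ $\min\{2(f+2),2(b+1)\}$) real rounds. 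The partial faults are absorbed \emph{inside each simulated round}, so nothing about the stopping time depends on $d$; indeed the paper's own proof derives $b+1$, not $d+1$, in the second term (the $d+1$ in the theorem statement appears to be a typo for $b+1$, and your attempt to literally justify $d+1$ is what leads you astray).

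Concretely, two steps of your argument fail. First, the claim that ``after $d+1$ layers of majority filtering a bad value can no longer survive'' has no support: the number of iterations of View-Transform is governed by the depth $k$ of the scenario (one iteration per level of the recursive message structure), not by $d$; a single $LM_3$ application already eliminates the partial faults at one level because $n-m-b-1 > m+d+b-1$, so $d+1$ successive layers accomplish nothing that one layer does not, while fewer than $k-2$ layers leave some levels uncorrected. Second, the round accounting is incoherent: a preamble costing $3(d+1)$ rounds followed by an inner algorithm costing $3\min\{f+2,b+1\}$ further rounds yields a \emph{sum}, not the minimum $\min\{3(f+2),3(d+1)\}$; and a two-phase design in which all communication for the transform happens up front cannot stop early, because the transformed scenario only exists after the full exchange. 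Early decision requires the interleaved simulation the paper uses, where the inner algorithm's decision to stop after $f+2$ of its own rounds translates directly into stopping after $3(f+2)$ real rounds.
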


\begin{proof}

First consider $n > \max \{ 2 m + d, 2 d + m, b \} + 2 b$. From Lemma \ref
{lemma-lm3}, for any scenario $\sigma$ we have $L M_3 (\sigma^{p_0}_p) =
\sigma (p_0)$ provided that $p_0$ is non-Byzantine. By definition $\sigma_p^{p_0} =
\sigma_p$, so we have $L M_3 (\sigma_p) = \sigma (p_0)$. This means every
non-Byzantine process could get the initial value of the non-Byzantine transmitter $p_0$
despite that $p_0$ might be partial faulty. Thus by applying $L M_3$ to a
$3$-round scenario $\sigma$, we could obtain a $3$-round reliable broadcast
algorithm. If we use this reliable broadcast algorithm as a broadcast
primitive in the early deciding algorithms in
\cite{dolev1982polynomial,krings1999byzantine}, then we could get early
deciding Byzantine agreement for a (n, m, d, b)-system as well. The time
complexity of algorithms in \cite{dolev1982polynomial,krings1999byzantine} is
$min \{ f+2, b + 1 \}$. Since we replace one round broadcast with three rounds
broadcast, the time complexity of early deciding algorithm with $3$-round
reliable broadcast is $min \{ 3(f + 2), 3(b + 1) \}$.

The result for $n \geqslant max\{2m+2d, b + 1\} + 2b$ follows from the same idea.
\end{proof}

\section{The Eventually Synchronous Case}\label{sec-esyn}

We considered so far synchronous computations. However, it is
also possible to tolerate partial failures in eventually synchronous systems. In this
section, we first present a reliable broadcast implementation that tolerates
partial Byzantine failures. Here, reliable broadcast ensures that if a non-Byzantine
process broadcasts a message then other processes will receive the
same message eventually (no such guarantee for Byzantine processes).
This broadcast primitive thus can be plugged into an
algorithm like \cite{castro1999practical}.

We assume here that after an unknown but finite time the
system become synchronous \cite{lynch1996distributed}.
Within an eventually synchronous system, the processes could not distinguish
message delay from the absence of a message. We consider a \tmem{static} (n,
m, d, b)-system which includes up to $b$ Byzantine processes and up to $m$
partial faulty processes each of which is associated with up to $d$
\tmem{fixed} Byzantine links. We first show that the algorithm $L M_2$ and $L
M_3$ can be modified to achieve reliable broadcast in an eventually synchronous (n, m,
d, b)-system.

\begin{algorithm}[h]
  {\tmstrong{Assume}}: $\sigma_p$ is a $2$-round view of process $p$ for a static
  (n, m, d, b)-system with $k \geqslant 2$ and $p_0$ is the
  transmitter.{\hspace*{\fill}}

  {\tmstrong{Code}} for $p$:
  \begin{fkindent}
  \begin{enumeratenumeric}
    \item Waits until receiving more than $n - m - d - b$ values for
    $\{ \sigma_p (p_0 p_1) : {p_1 \in P \backslash p_0} \}$ with a same value $v$, then
    output $v$.
  \end{enumeratenumeric}
  \end{fkindent}

  \caption{$2$-round Reliable-Broadcast ($R B_2$)\label{algo-rb2}}
\end{algorithm}

\begin{lemma}
  In $R B_2$ (Algorithm \ref{algo-rb2}), if $n \geqslant 2 m + 2 d + 2 b$ and $p_0$
  is non-Byzantine, then $R B_2 (\sigma_p) = \sigma (p_0)$.
\end{lemma}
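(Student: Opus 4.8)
The plan is to split the claim into a safety half and a liveness half. Safety: $p$ can never collect more than $n-m-d-b$ reports of any value other than $\sigma(p_0)$, so whenever $R B_2$ decides at $p$ it decides $\sigma(p_0)$. Liveness: because the system is eventually synchronous, $p$ does eventually collect enough reports equal to $\sigma(p_0)$ to cross the threshold, so it decides. Together these give $R B_2(\sigma_p)=\sigma(p_0)$.

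For the safety half I would classify the processes $p_1\in P\setminus p_0$ by failure status and bound how many can ever make the report $\sigma_p(p_0 p_1)=\sigma(p_0 p_1 p)$ differ from $\sigma(p_0)$. Byzantine $p_1$ can always do so: at most $b$ of these. A d-faulty $p_1\neq p_0$ can corrupt its round-$2$ link to $p$: at most $m$ of these (at most $m-1$ if $p_0$ is itself d-faulty). A correct $p_1$ relays faithfully, hence reports a wrong value only if it received one from $p_0$ in round $1$; since $p_0$ is non-Byzantine it corrupts at most $d$ of its round-$1$ links, so at most $d$ correct relayers are tainted (and none if $p_0$ is correct). Summing, the number of reports that can ever differ from $\sigma(p_0)$ is at most $m+d+b-1$, which is strictly less than $m+d+b$ and hence at most $n-m-d-b$ by the hypothesis $n\geq 2m+2d+2b$. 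So no wrong value ever reaches the threshold; in particular $p$ cannot decide prematurely during the asynchronous prefix of the execution.

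For the liveness half I would use that after the (unknown) global stabilization time every message from a non-Byzantine process is delivered, and count the processes in $P\setminus p_0$ that faithfully forward $\sigma(p_0)$ to $p$. If $p_0$ is correct, every correct process forwards $\sigma(p_0)$, which together with the value $p$ obtained directly from $p_0$ already gives enough copies; here $d>0$ and $n\geq 2m+2d+2b$ provide the needed margin. If $p_0$ is d-faulty, I would further split on whether $p_0$'s round-$1$ corruption targets $p$: if it does, then $p_0$ has one fewer corrupted round-$1$ link to spend on the relayers, so at least that many correct processes received and re-send $\sigma(p_0)$; if it does not, then $p$ holds $\sigma(p_0)$ itself and the correct relayers supply the remainder. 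In every sub-case, after stabilization $p$ accumulates more than $n-m-d-b$ copies of $\sigma(p_0)$ and decides it.

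The delicate step is exactly this last count when the transmitter is d-faulty: one must track how the adversary can distribute $p_0$'s $d$ round-$1$ corruptions between $p$ and the relayers of the broadcast, and check that in every distribution the surviving honest relayers — together, when available, with $p$'s own direct receipt — still strictly exceed $n-m-d-b$. This is where the slack of $n\geq 2m+2d+2b$ over the signed-message bound $n>m+d+b$ is consumed. The very same counting, carried out for an arbitrary non-Byzantine recipient rather than for $p$, also shows that all non-Byzantine processes decide $\sigma(p_0)$, yielding the reliable-broadcast guarantee required to use $R B_2$ inside a PBFT-style protocol.
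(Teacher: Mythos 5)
Your proof is correct and takes essentially the same route as the paper's, which compresses your safety/liveness split into the single observation that at least $n-m-d-b$ correct processes relay $\sigma(p_0)$ to $p$ and $2(n-m-d-b)>n-1$, so this is a majority of the $n-1$ reports and no other value can reach the threshold. (Both you and the paper gloss over the same off-by-one: in the worst case the honest count is exactly $n-m-d-b$, not \emph{more than} $n-m-d-b$ as the threshold in Algorithm $RB_2$ literally requires, so that threshold should read ``at least''.)
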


\begin{proof}
  As in Lemma \ref{lemma-lm2}, $p_0$ will receive $n - m - d - b$ $\sigma_p
  (p_0 p_1)$ that equal to $\sigma (p_0)$ from $n - m - d - b$ correct
  processes. Since $2 ( n - m - d - b ) >
  n - 1$, the lemma follows.
\end{proof}

\begin{algorithm}[h]
  {\tmstrong{Assume}}: $\sigma_p$ is a $3$-round view of process $p$ for a static
  (n, m, d, b)-system with $k \geqslant 3$ and $p_0$ is the
  transmitter.{\hspace*{\fill}}

  {\tmstrong{Code}} for $p$:
  \begin{fkindent}
  \begin{enumeratenumeric}
    \item $p$ initializes an empty set $S$.
    
    \item Waits until receiving $n - m - b - 1$ values for $\{ \sigma_p 
    (p_0 p_1 p_2) : {p_2 \in P \backslash p_1} \}$ with a same value $v$, then $p$ adds 
    $v$ to $S$.
    
    \item Waits until $n - m - d - b$ values in $S$ have a same value $v'$,
    then $p$ outputs $v'$.
  \end{enumeratenumeric}
  \end{fkindent}
  \caption{$3$-round Reliable-Broadcast ($R B_3$)\label{algo-rb3}}
\end{algorithm}

\begin{lemma}
  In $R B_3$ (Algorithm \ref{algo-rb3}), if $n > \max \{ 2 m + d, 2 d + m, b \} + 2 b$
  and $p_0$ is non-Byzantine, then $R B_3 (\sigma_p) = \sigma (p_0)$.
\end{lemma}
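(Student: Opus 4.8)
The plan is to follow the structure of the proof of Lemma~\ref{lemma-lm3} (its $i=0$ case) for the safety part, and to exploit the eventual stabilization of the system for the liveness part; throughout we take $m,d>0$ (the case $m=d=0$ being a standard reliable broadcast) and fix a non-Byzantine $p$, writing $v_\star=\sigma(p_0)$. First I would establish two properties of the multiset $S$ that $p$ builds on Line~$2$, playing the role of the two bulleted cases in Lemma~\ref{lemma-lm3}. (a) For every non-Byzantine $p_1\neq p_0$, the only value that can ever be added to $S$ for index $p_1$ is $\sigma(p_0 p_1)$: if $p_1$ is correct, the processes $p_2$ with $\sigma_p(p_0 p_1 p_2)\neq\sigma(p_0 p_1)$ number at most $m+b$ (the $\le m$ $d$-faulty processes whose fixed link to $p$ is corrupted, plus the $\le b$ Byzantine ones), which is less than the threshold $n-m-b-1$ because $n>2m+d+2b\ge 2m+2b+1$; if $p_1$ is $d$-faulty, these $p_2$ number at most $(m-1)+d+b$ (now also counting $p_1$'s $\le d$ corrupted outgoing links), again less than $n-m-b-1$ since $n>2m+d+2b$. (b) $\sigma(p_0 p_1)=v_\star$ for every non-Byzantine $p_1\neq p_0$ with at most $d$ exceptions, namely processes whose incoming link from $p_0$ is corrupted, which can occur only when $p_0$ is $d$-faulty and then for at most $d$ of them.

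Safety then follows from (a) and (b): the entries of $S$ whose value is not $v_\star$ come only from Byzantine indices $p_1$ ($\le b$ of them) and from non-Byzantine indices $p_1$ with a corrupted link from $p_0$ ($\le d$ of them), so $S$ contains at most $b+d$ entries different from $v_\star$. Since $n>2d+m+2b$ implies $n-m-d-b>b+d$, Line~$3$ can never collect $n-m-d-b$ agreeing entries on any value other than $v_\star$; hence whenever $RB_3(\sigma_p)$ outputs, it outputs $v_\star$.

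For liveness I would use the assumption that after some finite (unknown) time the system is synchronous, so that from then on every message from a correct process to $p$ is delivered within bounded delay. Take any correct $p_1\neq p_0$ with $\sigma(p_0 p_1)=v_\star$. The $\ge n-m-b-1$ correct processes $p_2\in P\setminus\{p_1\}$ relay $\sigma(p_0 p_1)$ faithfully over their (correct, hence good) links, so $p$ eventually sees $n-m-b-1$ copies of $v_\star$ among $\{\sigma_p(p_0 p_1 p_2)\}$ and its Line~$2$ wait for index $p_1$ unblocks, adding $v_\star$ to $S$. Counting such $p_1$: if $p_0$ is correct, every $p_1$ satisfies $\sigma(p_0 p_1)=v_\star$, giving $\ge n-1-m-b$ choices; if $p_0$ is $d$-faulty, at most $d$ correct $p_1$ have a corrupted incoming link, giving $\ge (n-m-b)-d$ choices. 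In both cases, because $d\ge 1$, this is $\ge n-m-d-b$, so $S$ eventually holds $n-m-d-b$ copies of $v_\star$, the Line~$3$ wait unblocks, and $p$ outputs; by safety the output is $v_\star$.

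The main obstacle is the liveness side: one must argue that the two ``wait until'' statements genuinely unblock even though Byzantine processes may stay silent or bombard $p$ with inconsistent values, and one must make the counts fit exactly --- the $\le d$ slack introduced by a $d$-faulty transmitter in (b) has to be absorbed by the gap $d\ge 1$ between the Line~$2$ threshold $n-m-b-1$ and the Line~$3$ threshold $n-m-d-b$. It is precisely the \emph{static} (fixed-link) nature of the $d$-faulty processes that makes the counting in (a) valid over the whole execution rather than merely round by round, so that no spurious value can ever reach the Line~$2$ threshold.
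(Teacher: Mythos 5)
Your proof is correct and follows essentially the same counting argument as the paper's: per-$p_1$ safety of Line~2 via the thresholds $n-m-b-1$ versus at most $m+b$ (resp.\ $m+d+b-1$) bad relays, then at most $d+b$ non-$\sigma(p_0)$ entries in $S$ against the Line~3 threshold $n-m-d-b$. You are merely more explicit than the paper about why the two ``wait until'' statements eventually unblock, which is a welcome but not substantively different addition.
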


\begin{proof}
  If $p_1$ is correct, there are at least $n - m - b
  - 1$ values equal to $\sigma (p_0 p_1)$ in $\{ \sigma_p (p_0 p_1 p_2)
  : {p_2 \in P \backslash p_1} \}$ from correct processes, which implies
  $\sigma (p_0 p_1)$ will be added to $S$ eventually.
  
  If $p_1$ is d-faulty, there are at most $m + d + b -
  1$ values different from $\sigma (p_0 p_1)$ in $\{ \sigma_p (p_0 p_1 p_2)
  : {p_2 \in P \backslash p_1} \}$. Since $n - m - b - 1 \geqslant m + d + b - 1$, only
  $\sigma (p_0 p_1)$ might be added to $S$.
  
  Now consider the transmitter.
  If $p_0$ is non-Byzantine, all correct processes
  except the one receiving wrong values from $p_0$ will contribute a value
  $\sigma (p_0)$ to $S$. So $S$ will eventually include at least $n - m - d - b$
  values equal to $\sigma (p_0)$ and at most $d + b$ values different from
  $\sigma (p_0)$. Since $n > m + 2 d + 2 b$, $R B_3 (\sigma_p)$ can only be
  $\sigma (p_0)$.
\end{proof}

If $R B_2$ or $R B_3$ is employed as a broadcast primitive, i.e. a
process broadcasts a message by executing an instance of $R B_2$ or
$R B_3$, then the messages broadcast by non-Byzantine processes will be
received by other non-Byzantine processes as if there are no partial failures. In
this way, $R B_2$ and $R B_3$ could play the role of reliable
broadcast for a (n, m, d, b)-system. We could then use our reliable broadcast
primitive (either RB2 or RB3) within an algorithm such as PBFT \cite{castro1999practical}.
We obtain the following theorem.

\begin{theorem}
  Byzantine agreement can be solved assuming eventually synchronous computation of a static (n, m, d,
  b)-system ($m, d > 0$) if and only if $n > \max \{2 m + d, 2 d + m, b\}+ 2
  b$.
\end{theorem}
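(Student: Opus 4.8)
The plan is to prove the two directions separately, reusing essentially everything already established. For the positive direction, I would observe that the hypothesis $n > \max\{2m+d,\,2d+m,\,b\}+2b$ is precisely the condition under which Algorithm $RB_3$ is a correct reliable broadcast (by the $RB_3$ lemma just above), and that this hypothesis also forces $n > 3b$, since $m,d>0$ gives $2m+d\ge 3$ and hence $\max\{2m+d,2d+m,b\}\ge b$, so the bound is at least $3b$. The key point is that $RB_3$ \emph{masks} partial failures: when a non-Byzantine process broadcasts via an instance of $RB_3$, every non-Byzantine process eventually delivers that exact value, so at the level of the broadcast primitive a static $(n,m,d,b)$-system is indistinguishable from an eventually synchronous $(n,0,0,b)$-system in which the $m$ $d$-faulty processes are reclassified as correct. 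I would then run PBFT \cite{castro1999practical} on top of this primitive: PBFT solves Byzantine agreement in an eventually synchronous system with at most $b$ Byzantine processes whenever $n>3b$, and routing its inter-process communication through $RB_3$ eliminates equivocation by the $d$-faulty processes, leaving the $\le b$ genuinely Byzantine processes as the only faulty ones PBFT must cope with. Eventual delivery of $RB_3$ is guaranteed after the global stabilization time, which is all PBFT needs.

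For the impossibility direction, I would argue by reduction to the synchronous case. A synchronous execution is a special case of an eventually synchronous one, so any algorithm solving Byzantine agreement in the eventually synchronous static $(n,m,d,b)$-system, restricted to synchronous executions, solves it in the synchronous static $(n,m,d,b)$-system. Next I would point out that the scenarios $\alpha,\beta$ built in the proofs of Lemmas \ref{lemma-impsb1} and \ref{lemma-impsb2} use only \emph{static} faulty behaviour --- in every round, each $d$-faulty process lies only to one fixed set of at most $d$ processes --- so those proofs apply verbatim to the static model and rule out $n\le 2m+d+2b$ and $n\le 2d+m+2b$. The only range left to exclude is $n\le 3b$ together with $n>2m+d+2b$ and $n>2d+m+2b$, which can occur only when $b$ is large relative to $m,d$; there a $(n,0,0,b)$-scenario is a special $(n,m,d,b)$-scenario, so the classical $n>3b$ bound of \cite{lamport1982byzantine} (equivalently the eventually synchronous bound of \cite{dwork1988consensus}) already forbids it. Combining these, $n>\max\{2m+d,2d+m,b\}+2b$ is necessary.

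I expect the main obstacle to be making the reduction ``$RB_3$ masks partial faults, hence $d$-faulty $\equiv$ correct for PBFT'' genuinely precise: one must isolate exactly which guarantees PBFT demands of its communication layer --- validity and agreement for broadcasts by non-Byzantine senders, integrity/authentication, and eventual delivery after stabilization --- and verify that the $RB_3$ lemma supplies precisely these, noting in particular that no agreement guarantee is needed (nor available from $RB_3$) for broadcasts by Byzantine senders, since PBFT's view-change machinery already tolerates an equivocating primary. A secondary point to settle explicitly is the claim that the constructions of Lemmas \ref{lemma-impsb1}--\ref{lemma-impsb2} are static; this is a short inspection, but it must be stated, since the static adversary is strictly weaker than the dynamic one and impossibility does not transfer for free.
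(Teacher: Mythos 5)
Your proposal is correct and follows essentially the same route as the paper: sufficiency via $RB_3$ used as a partial-failure-masking reliable broadcast underneath PBFT, and necessity via Lemmas \ref{lemma-impsb1} and \ref{lemma-impsb2} transferred from the synchronous setting. In fact you are somewhat more careful than the paper's two-line proof, since you explicitly cover the residual case $n \leqslant 3b$ (which the cited lemmas do not exclude, but which follows because a $(n,0,0,b)$-scenario is a special static $(n,m,d,b)$-scenario) and you verify that the adversarial scenarios of those lemmas are realizable by a \emph{static} adversary, a point the paper leaves implicit.
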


\begin{proof}   
  The sufficiency follows from the above discussion. The
  necessity comes from Lemma \ref{lemma-impsb1} and Lemma \ref{lemma-impsb2}.
\end{proof}

\end{document}